\documentclass[draft]{amsart}

\textwidth 5.5in

\overfullrule=0pt

\usepackage{amssymb}

\usepackage{enumerate}

\usepackage{amsmath}
\usepackage{amsthm}
\usepackage{tikz} 
\usepackage{float}
\usepackage{mathtools}
\usepackage{graphicx}

\usepackage{pgf,tikz}\usepackage{mathrsfs}\usetikzlibrary{arrows}

\usetikzlibrary{matrix,fit}

\numberwithin{equation}{section}

\newtheorem{thm}{Theorem}[section]

\newtheorem{cor}[thm]{Corollary}
\newtheorem{lem}[thm]{Lemma}
\newtheorem{rem}[thm]{Remark}

\theoremstyle{definition}
\newtheorem{defn}{Definition}[section]
\newtheorem{ex}[defn]{Example}

\setcounter{MaxMatrixCols}{20}

\begin{document}
\definecolor{ffqqqq}{rgb}{1.,0.,0.}
\definecolor{qqqqff}{rgb}{0.,0.,1.}
\definecolor{ududff}{rgb}{0.30196078431372547,0.30196078431372547,1.}
\definecolor{ffqqtt}{rgb}{1.,0.,0.2}
\definecolor{zzttqq}{rgb}{0.6,0.2,0.}
\definecolor{uuuuuu}{rgb}{0.26666666666666666,0.26666666666666666,0.26666666666666666}

\author{Sa\'ul A. Blanco and Charles Buehrle}
\title[Relations on prefix reversal generators]{Some relations on prefix reversal generators of the symmetric and hyperoctahedral group}
\address{Department of Computer Science, Indiana University, Bloomington, IN 47408}
\email{sblancor@indiana.edu}
\address{ Department of Mathematics, Physics, and Computer Studies, Notre Dame of Maryland University, Baltimore, MD 21210}
\email{cbuehrle@ndm.edu}

\bibliographystyle{plain}

\date{June 7, 2018}

\begin{abstract}
The pancake problem is concerned with sorting a permutation (a stack of pancakes of different diameter) using only prefix reversals (spatula flips). Although the problem description belies simplicity, an exact formula for the maximum number of flips needed to sort $n$ pancakes has been elusive. 

In this paper we present a different approach to the pancake problem, as a word problem on the symmetric group and hyperoctahedral group. Pancake flips are considered as generators and we study the relations satisfied by them. We completely describe the order of the product of any two of these generators, and provide some partial results on the order of the product of any three generators. Connections to the pancake graph of the hyperoctahedral group are also drawn.
\end{abstract}

\maketitle

\section{Introduction}\label{s:intro}
The first appearance of the pancake problem in print was in the Problems and Solutions section of the December 1975 \emph{Monthly}~\cite{Dweighter75}. 
	\begin{quote}
		The chef in our place is sloppy, and when he prepares a stack of pancakes they come out all different sizes. Therefore, when I deliver them to a customer, on the way to the table I rearrange them (so that the smallest winds up on top, and so on, down to the largest on the bottom) by grabbing several from the top and flipping them over, repeating this (varying the number I flip) as many times as necessary. If there are $n$ pancakes, what is the maximum number of flips (as a function $f(n)$ of $n$) that I will ever have to use to rearrange them?
	\end{quote}

The problem of determining the maximum number of flips that are ever needed to sort a stack of $n$ pancakes is known as the \emph{pancake problem}, and the $f(n)$ is known as the \emph{pancake number}.
	
This initial posing of the problem was made by Jacob E. Goodman of the City College of New York, under the pseudonym Harry Dweighter (a pun for ``harried waiter"). In~\cite{DGJL77}, as a commentary to the problem formulation in~\cite{Dweighter75}, Michael R. Garey, David S. Johnson, and Shen Lin from Bell Labs gave the first upper and lower bound to the the pancake number: 
\[n+1 \leq f(n) \leq 2n-6 \text{ for } n \geq 7.\]

Subsequent results have been successful in tightening these bounds. The first significant tightening of the bounds was described in the work of William H. Gates and Christos H. Papadimitriou~\cite{GatesPapa}, which incidentally is the only academic paper Gates ever wrote. The best upper and lower bound known today for the general case appeared in ~\cite{Chitt} and~\cite{HeySud}, respectively. Combined, one has that
\[
15\left\lfloor \frac{n}{14}\right\rfloor \leq f (n) \leq \frac{18n}{11}+O(1).
\]

Computing the pancake number for a given $n$ is a complicated task. To our knowledge, the exact value of $f(n)$ is only known for $1\leq n\leq 19$ (see~\cite{Asai2006, Cibulka, CohenBlum, HeySud, KKS}). In fact, determining the minimum number needed to sort a stack of pancakes is an NP-hard problem~\cite{BulFerRusu}, though 2-approximation algorithms exists~\cite{Fischer2005}.

The pancake problem has connections to parallel computing, in particular in the design of symmetric interconnection networks (networks used to route data between the processors in a multiprocessor computing system) where the so-called \emph{pancake graph}, the Cayley graph of the symmetric group under prefix reversals, gives a model for processor interconnections (see~\cite{AkersKrish, Qiu1991}). A pancake network is shown in Figure~\ref{f:network}. One can also define a \emph{burnt pancake graph} on signed permutations (See Section~\ref{s:notation} for the necessary definitions), and we exhibit one in Figure~\ref{f:networkb}.

\begin{figure}
\begin{center}
\begin{tikzpicture}[scale=1.5,every node/.style={scale=1}]
		\node (e) at (-1.5,3.5) {$1234$};
			
		\node (2) at (-2.5,2.5) {$3214$};
		\node (1) at (-0.5,2.5) {$2134$};
		\node (3) at (1.5,3.5) {$4321$};
			
		\node (32) at (2.5,-2.5) {$4123$};
		\node (12) at (-2.5,1.5) {$2314$};
		\node (21) at (-0.5,1.5) {$3124$};
		\node (31) at (-0.5,-2.5) {$4312$};
		\node (13) at (0.5,2.5) {$3421$};
		\node (23) at (2.5,2.5) {$2341$};

		\node (232) at (1.5,-3.5) {$2143$};						
		\node (132) at (2.5,-1.5) {$1423$};
		\node (312) at (-2.5,-1.5) {$4132$};
		\node (212) at (-1.5,0.5) {$1324$};
		\node (321) at (0.5,-1.5) {$4213$};
		\node (231) at (-0.5,-1.5) {$1342$};
		\node (131) at (-1.5,-3.5) {$3412$};
		\node (213) at (0.5,1.5) {$2431$};
		\node (313) at (0.5,-2.5) {$1243$};
		\node (123) at (2.5,1.5) {$3241$};
		\node (323) at (-2.5,-2.5) {$1432$};
			
		\node (1321) at (1.5,-0.5) {$2413$};
		\node (1231) at (-1.5,-0.5) {$3142$};
		\node (1213) at (1.5,0.5) {$4231$};
			
		\draw[line width=2.pt,red](e)--(1) (2)--(12) (3)--(13) (31)--(131) (21)--(212) (32)--(132) (23)--(123) (231)--(1231) (321)--(1321) (213)--(1213);
		\draw[line width=2.pt,blue] (e)--(2) (1)--(21) (3)--(23) (31)--(231) (12)--(212) (32)--(232) (13)--(213) (312)--(1231) (132)--(1321) (123)--(1213);
		\draw[line width=2.pt,purple!50!black] (e)--(3) (2)--(32) (21)--(321) (12)--(312) (23)--(323) (212)--(1213);
		\draw[line width=2.pt,purple!50!black,domain=-72:72] plot ({0.25+2*cos(\x)},{2.5*sin(\x)}) plot ({-0.25+2*cos(\x+180)},{2.5*sin(\x+180)});
		
		\draw[line width=2.pt] (232) edge[red] (313) 
		          (232) edge[purple!50!black] (131) 
		          (132) edge[purple!50!black] (123) 
		          (321) edge[blue] (313) 
		          (231) edge[purple!50!black] (213)
		          (312) edge[red] (323)
		          (131) edge[blue] (323)
		          (1321) edge[purple!50!black] (1231);	
	\end{tikzpicture}	
\caption{Pancake graph of $S_4$. The different colors indicate the different pancake generators.}
\label{f:network}
\end{center}
\end{figure}

\begin{figure}
\begin{center}
	\begin{tikzpicture}[scale=0.15,line cap=round,line join=round,>=triangle 45,x=1.0cm,y=1.0cm]
	
	\begin{scriptsize}
	\node (e) at (-13,-33) {123};
	\node (0) at (13,-33) {\underline{1}23};
	\node (10) at (33,-14) {\underline{2}13};
	\node (010) at (33,14) {213};
	\node (1010) at (13,33) {$\underline{1}\,\underline{2}3$};
	\node (101) at (-13,33) {$1\underline{2}3$};
	\node (01) at (-33,14) {2\underline{1}3};
	\node (1) at  (-33,-14) {\underline{2}\,\underline{1}3};

	\node (202) at  (-4,-9) {12\underline{3}};
	\node (0202) at  (4,-9) {\underline{1}2\underline{3}};
	\node (10202) at  (9,-3.5) {\underline{2}1\underline{3}};
	\node (010202) at  (9,4) {21\underline{3}};
	\node (1010202) at  (4,9) {$\underline{1}\;\underline{2}\,\underline{3}$};
	\node (101202) at  (-4,9) {$1\underline{2}\,\underline{3}$};
	\node (01202) at  (-9,4) {$2\underline{1}\;\underline{3}$};
	\node (1202) at  (-9,-3.5) {$\underline{2}\,\underline{1}\,\underline{3}$};

	\node (21) at  (-12.5,-20) {\underline{3}12};
	\node (121) at  (-6,-15) {\underline{1}32};
	\node (0121) at  (6,-15) {132};
	\node (10121) at  (12.5,-20) {\underline{3}\,\underline{1}\,2};
	\node (010121) at  (11,-25) {3\underline{1}2};
	\node (01021) at  (7,-29.5) {1\underline{3}2};
	\node (1021) at  (-7,-29.5) {\underline{1}\,\underline{3}2};
	\node (021) at  (-11,-25) {312};
	
	\node (212) at (6,15) {1\underline{3}\,\underline{2}};
	\node (0212) at (-6,15) {\underline{1}\,\underline{3}\,\underline{2}};
	\node (10212) at (-12.6,20.2) {31\underline{2}};
	\node (010212) at (-12,25) {\underline{3}1\underline{2}};
	\node (1010212) at (-7,29.5) {\underline{1}3\underline{2}};
	\node (101212) at (7,29.5) {13\underline{2}};
	\node (01212) at (12,25) {\underline{3}\,\underline{1}\,\underline{2}};
	\node (1212) at (12.5,20) {3\underline{1}\,\underline{2}};
	
	\node (20) at  (29.5,-8) {\underline{3}\,\underline{2}1};
	\node (020) at  (29.5,7) {3\underline{2}1};
	\node (1020) at  (26,11.5) {2\underline{3}1};%
	\node (01020) at  (18.5,11.5) {\underline{2}\,\underline{3}1};
	\node (101020) at  (17,5.5) {321};
	\node (10120) at  (16.5,-7) {\underline{3}\,21};	
	\node (0120) at  (18.5,-11.5) {\underline{2}\,31};
	\node (120) at  (24.5,-11.5) {231};	

	\node (2) at  (-29.5,-8) {\underline{3}\,\underline{2}\,\underline{1}};
	\node (02) at  (-29.5,7) {3\underline{2}\,\underline{1}};
	\node (102) at  (-26,11.5) {2\underline{3}\,\underline{1}};%
	\node (0102) at  (-18.5,11.5) {\underline{2}\,\underline{3}\,\underline{1}};
	\node (10102) at  (-17,5.5) {32\underline{1}};
	\node (1012) at  (-16.5,-7) {\underline{3}2\underline{1}}; 
	\node (012) at  (-18.5,-11.5) {\underline{2}3\underline{1}};
	\node (12) at  (-24.5,-11.5) {23\underline{1}};	
	\end{scriptsize}
	
	\draw[line width=2,purple!50!black] (e)--(0) (10)--(010) (101)--(1010) (1)--(01) (202)--(0202) (10202)--(010202) 
	(101202)--(1010202) (1202)--(01202) (21)--(021) (121)--(0121) (10121)--(010121) (1021)--(01021) 
	(212)--(0212) (10212)--(010212) (101212)--(1010212) (1212)--(01212) (20)--(020) (1020)--(01020) 
	(10120)--(101020) (120)--(0120) (2)--(02) (102)--(0102) (1012)--(10102) (12)--(012);
	\draw[line width=2,red] (e)--(1) (0)--(10) (2)--(12) (01)--(101) (02)--(102) (20)--(120) 
	(21)--(121) (010)--(1010) (012)--(1012) (020)--(1020) (021)--(1021) (010212)--(1010212) 
	(0121)--(10121) (202)--(1202) (212)--(1212) (0202)--(10202) (010202)--(1010202) (01202)--(101202) 
	(0102)--(10102) (01021)--(010121) (01020)--(101020) (0120)--(10120) (01212)--(101212) (0212)--(10212);
	\draw[line width=2,blue] (e)--(2) (0)--(20) (1)--(21) (10)--(10121) (02)--(202) (010)--(01212) 
	(1010)--(10120) (101)--(1012) (01)--(010212) (020)--(0202)  (010121)--(10202) (10102)--(101202)
	(101020)--(1010202) (120)--(0212) (121)--(01020) (12)--(212) (0121)--(0102) (010202)--(1212)
	(01021)--(012) (1021)--(0120) (021)--(1202) (101212)--(102) (1010212)--(1020) (01202)--(10212);
	
\end{tikzpicture}
\caption{Pancake graph of $B_3$. Different edge colors indicate the different pancake generators.}
\label{f:networkb}
\end{center}
\end{figure}

The remainder of this paper is organized as follows. The basic terminology and notation are included in Section~\ref{s:notation}. In Section~\ref{s:snresults} we present the results corresponding to the pancake generators for $S_n$. A description of the Coxeter-like relations, that is, a description of the order of $f_if_j$ where $f_i,f_j$ are pancake generators. This description can be deduced from \cite[Lemma 1]{KM16}. Furthermore, we exhibit a partial description of the order of the product of three pancake generators and prove that the set of reflections generated by the pancake generators is the set of involutions in $S_n$. Finally, in Section~\ref{s:sbresults}, we exhibit a description of the order of the product of two pancake generators for signed permutations, which in turn gives the length of certain cycles in the burnt pancake graph.

\section{Terminology and Notation}\label{s:notation}

Let $S_{n}$ be the group of permutations of the set $[n]:=\{1,2,\ldots,n\}$ and denote by $e$ the identity permutation. The group $S_{n}$ is generated by the set $S:=\{s_{1},\ldots, s_{n-1}\}$ of adjacent transpositions; that is, $s_{i}=(i, i+1)$ in cycle notation. The set $S$ is subject to the relations $s_{i}^{2}=e$ for all $1\leq i\leq n-1$, $(s_{i}s_{i+1})^{3}=e$ for all $1\leq i\leq n-2$, and $s_{i}s_{j}=s_{j}s_{i}$ for all $i,j\in[n-1]$ with $|i-j|\geq 2$. It is well-known that the pair $(S_n,S)$ is a \emph{Coxeter system} (see~\cite{BjornerBrenti}). 

In general, if $X$ is a finite set, a \emph{Coxeter matrix} is one whose entries $m_{i,j}\in\mathbb{Z}^+\cup\{\infty\}$ satisfy $m_{i,j}=m_{j,i}$ and $m_{i,j}=1$ if an only if $i=j$ for every $i,j\in X$. It is well known that, up to isomorphism, there is a one-to-one correspondence between Coxeter matrices and Coxeter systems (see~\cite[Theorem 1.1.2]{BjornerBrenti}).

The pancake problem has a straight-forward interpretation in terms of permutations. A stack of $n$ pancakes of different sizes can be thought of as an element of $S_n$ and flipping a stack of pancakes with a spatula can be thought of as using a \emph{prefix reversal permutation}; that is, a permutation whose only action when composed with $w\in S_n$ is to reverse the first so many characters of $w$, in one-line-notation. In other words, using one-line notation, a prefix reversal permutation of $S_n$ has the form 
\begin{align*}&i+1\;\;i\;(i-1)\;\;\cdots 2\;\;1\;\;(i+2)\;\;(i+3)\;\cdots n\\ &=(1,i+1)(2,i)\cdots\left(\left\lfloor\frac{i+2}{2}\right\rfloor,\left\lceil\frac{i+2}{2}\right\rceil\right)\text{, as a product of transpositions},
\end{align*} for some $i\in[n-1]$.

We denote the above permutation by $f_{i}$, with $1\leq i\leq n-1$ and define $P=\{f_1, \ldots, f_{n-1}\}$. For example, in $S_{4}$ one has $f_{1}=2134, f_{2}=3214$, and $f_{3}=4321$. Notice that effect of applying $f_{i}$ to a permutation is similar to that of using a spatula to flip a stack of pancakes since one is reversing the order of the first $i+1$ entries and leaving the rest untouched. 

One can easily see that $s_{i}=f_{i}f_{1}f_{i}$ and that $f_{i}=s_{1}\ldots s_{i-1}\ldots s_{2}s_{1}s_{i}\ldots s_{2}s_{1}$. Hence, $S_{n}$ is also generated by $P$. We refer to the elements of $P$ as \emph{pancake generators} of $S_n$. Furthermore, notice that $f_{i}=s_{1}\ldots s_{i-1}\ldots s_{2}s_{1}s_{i}\ldots s_{2}s_{1}$.

Let $B_n$ be the \emph{hyperoctahedral group}, most commonly referred to as the group of \emph{signed permutations} of the set $[\pm n]=\{-n,-(n-1),\ldots,-1,1,2,\ldots,n\}$. That is, permutations $w$ of $[\pm n]$ satisfying $w(-i)=-w(i)$ for all $i\in[\pm n]$. We shall use \emph{window notation} to denote $w \in B_n$; that is, we denote $w$ by $[w(1), w(2), \ldots, w(n)]$.  The group $B_n$ is generated by the set $\{s^B_0,s^B_1,\ldots, s^B_{n-1}\}$, where $s^B_0=[ \underline{1}\; 2\; \cdots\; n]$ and for $1\leq i \leq n-1$, $s^B_i=[1\;2\;\cdots i+1\;i\; i+2\cdots\; n]$ (see~\cite[Chapter 8]{BjornerBrenti}).

The burnt pancake generators indicate the orientation of the entries: they are negative if they have been reversed an odd number of times and positive otherwise. We define $f^B_i$, $1 \leq i \leq n-1$ to be the signed permutation \begin{align*} f^B_{i}&= [\underline{i+1}\;\underline{i}\;\underline{i-1}\;\cdots \underline{2}\; \underline{1}\; (i+2)\; (i+3)\; \cdots n]\\ &= (1,\,\underline{i+1},\,\underline{1},\, i+1)(2,\, \underline{i},\, \underline{2},\, i)\cdots \left( \left\lfloor\frac{i+2}{2} \right\rfloor, \,\, \underline{\left\lceil\frac{i+2}{2}\right\rceil}, \underline{\left\lfloor\frac{i+2}{2} \right\rfloor}, \,\,  \left\lceil\frac{i+2}{2}\right\rceil \right)\\ &\text{ in disjoint cycle form as elements of the symmetry group of } [\pm n], \end{align*} and $f^B_0=s^B_0$. Thus, for example, in $B_4$ we have $f^B_0=[\underline{1}\;2\;3\;4], f^B_1 =[\underline{2}\;\underline{1}\;3\;4], f^B_2=[\underline{3}\;\underline{2}\;\underline{1}\;4]$, and $f^B_3=[\underline{4}\;\underline{3}\;\underline{2}\;\underline{1}]$. We shall define $P^B=\{f^B_0, f^B_1, \ldots, f^B_{n-1}\}$ as the set of \emph{burnt pancake generators}, or \emph{burnt pancake flips}. It should be noted, that these are the signed versions indicated in this paragraph.

One can see that $s^B_i = f^B_i f^B_0 f^B_1 f^B_0 f^B_i$ for $1\leq i\leq n-1$ and $s^B_0=f^B_0$, thus $B_n$ is also generated by $P^B$. Furthermore, we note that $f^B_i = s^B_0 s^B_1 \ldots s_0^B s^B_{i-1} \ldots s^B_2 s^B_1 s^B_0 s^B_i \ldots s^B_2 s^B_1 s^B_0$.

\section{$S_n$ results}\label{s:snresults}

In this section we take a look at the pancake generators for $S_n$. In this case, the pancake matrix $M_{n-1}=(m_{i,j})_{(n-1)\times (n-1)}$ where $m_{i,j}$ is the order of $f_if_j$ can be derived from \cite[Lemma 1]{KM16}. We include their result and then prove that ``reflections" using the pancake generators are just the set of involutions in $S_n$. We conclude the section with some results for the order of elements of the form $f_if_jf_k$.

The theorem below provides a description for $M_{n-1}$. It turns out $M_{n-1}$ is symmetric and all its diagonal entries are 1.  Most of these entries are described by rephrasing~\cite[Lemma 1]{KM16}.

	\begin{thm}\label{t:main}
	
	If ${m_{i-1,j-1}}$ is the order of $f_{i-1} f_{j-1}$ with $1<i<j\leq n$, then
 		\begin{enumerate}
 		\item\label{c:diagonals} $m_{i-1,i-1}=1$,
 		\item\label{c:symmetry}  $m_{i-1,j-1}=m_{j-1,i-1}$,
 		\item\label{c:2,3} $m_{1,2}=3$, and
 		\item\label{c:lemma1} if $j\geq4$ then
 		\begin{enumerate}
 		\item If $1<i\leq\lfloor\frac{j}{2}\rfloor$ ,then $m_{i-1,j-1}=4$.
 		\item If $1<\lfloor\frac{j}{2}\rfloor<i<j-1$, then
 	\[	m_{i-1,j-1}=\begin{cases}
 		 2q(q+1), &\text{if } r\geq2,t\geq2, \text{ or } r=1,t\geq2,q \text{ is even, }\\ &\text{or } r\geq2,t=1,q \text{ is odd;}\\
 		 q(q+1), &\text{if } r=1,t\geq2,q \text{ is odd, or } r\geq2,t=1,q \text{ is even,}\\ &\text{or } r=1,t=1;\\
 		 2q, &\text{if } r=0.
 		\end{cases} 
 		\] where $d=j-i$, $q=\lfloor\frac{j}{d}\rfloor$, $r=j\pmod{d}$ and $t=d-r$.
 		\item If $i=j-1$, then $m_{i-1,j-1}=j$
 		\end{enumerate}
 		\end{enumerate}
 		\end{thm}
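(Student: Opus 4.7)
My plan is to study the permutation $\phi := f_{i-1} f_{j-1}$ explicitly and read off its order from its cycle structure. Parts~(1)--(3) are immediate: (1) holds because each $f_{i-1}$ is an involution; (2) is the identity $|uv| = |vu|$ (conjugate $f_{j-1}f_{i-1}$ by $f_{i-1}$); and (3) is a direct verification that $f_1 f_2$ is a $3$-cycle in $S_n$ for every $n \ge 3$. For part~(4), setting $d = j - i$, a direct computation in one-line notation shows
\[
\phi(k) = \begin{cases} j+1-k, & 1 \le k \le d, \\ k - d, & d+1 \le k \le j, \\ k, & k > j, \end{cases}
\]
so $\phi$ reflects the ``reflection block'' $[1,d]$ onto $[i+1,j]$ and shifts the ``shift block'' $[d+1,j]$ down by $d$. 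The desired order $m_{i-1,j-1}$ equals the least common multiple of the cycle lengths of $\phi$.

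Parts~(4c) and~(4a) are quick consequences of this description: when $d = 1$, $\phi$ is the single $j$-cycle $1 \mapsto j \mapsto j-1 \mapsto \cdots \mapsto 2 \mapsto 1$, of order $j$; when $d \ge \lceil j/2 \rceil$, one checks that $1 \mapsto j \mapsto i \mapsto d+1 \mapsto 1$ is a $4$-cycle (the four entries are distinct in this regime), and a similar trajectory analysis shows every remaining orbit has length $4$, $2$, or $1$, so the LCM equals~$4$.

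Part~(4b), where $2 \le d \le \lfloor (j-1)/2 \rfloor$, is the main case. Since $i + 1 > d$, every reflection image $j+1-k$ lies strictly inside the shift block, so each $\phi$-orbit consists of one reflection followed by a string of shifts, repeated. Define the return map $\psi\colon [1, d] \to [1, d]$ by $\psi(k) = \phi^{m_k + 1}(k)$ with $m_k \ge 1$ minimal such that $\phi^{m_k + 1}(k) \in [1, d]$. Writing $j = qd + r$ and $t = d - r$, one computes $\psi(k) \equiv r + 1 - k \pmod d$ (normalised to $[1, d]$), so $\psi$ is an involution which preserves the partition $[1, d] = [1, r] \sqcup [r+1, d]$ and acts as a reflection on each block. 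A parallel computation gives pass length $m_k + 1 = q + 1$ on $[1, r]$ and $q$ on $[r+1, d]$. Consequently, each $\phi$-orbit has length $q+1$ (a $\psi$-fixed point in $[1, r]$), $q$ (a $\psi$-fixed point in $[r+1, d]$), $2(q+1)$ (a $\psi$-$2$-cycle in $[1, r]$), or $2q$ (a $\psi$-$2$-cycle in $[r+1, d]$), and the counts of each type are controlled by the parities of $r$ and $t$.

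Finally, one takes the LCM case by case. If $r = 0$ only $2q$-orbits occur, giving $m = 2q$. If $r, t \ge 2$ both $2q$- and $2(q+1)$-orbits occur, and since $\gcd(q, q+1) = 1$ the LCM is $2q(q+1)$. When $r = 1$ (so no $2(q+1)$-orbits occur) or $t = 1$ (so no $2q$-orbits occur), the LCM reduces to $\operatorname{lcm}(q+1, 2q)$ or $\operatorname{lcm}(q, 2(q+1))$, which equals $2q(q+1)$ or $q(q+1)$ depending on whether the relevant $\gcd$ with $2$ is $1$ or $2$, i.e.\ on the parity of $q$. Running through the six combinations produces the piecewise formula stated. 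The main obstacle is precisely this last parity bookkeeping; the result matches the content of~\cite[Lemma~1]{KM16}, from which (4b) can also be obtained by a translation of notation.
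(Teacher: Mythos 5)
Your proposal is correct, but it takes a genuinely different route from the paper for the substantive part. The paper only argues parts (\ref{c:diagonals})--(\ref{c:2,3}) directly (essentially as you do: involutivity, $(f_if_j)^{-1}=f_jf_i$, and a direct check for $m_{1,2}$); for part (\ref{c:lemma1}) it computes nothing, instead observing that the pancake matrices nest as submatrices under $S_{n-1}\subset S_n$ and then quoting \cite[Lemma 1]{KM16}, which is precisely that statement. You instead reprove the cited computation from scratch: the explicit one-line description of $f_{i-1}f_{j-1}$ (reflect $[1,d]$ to the top of the window, shift $[d+1,j]$ down by $d$), the return map $\psi(k)\equiv r+1-k\pmod{d}$ on $[1,d]$ with pass lengths $q+1$ on $[1,r]$ and $q$ on $[r+1,d]$, the resulting orbit lengths $q$, $q+1$, $2q$, $2(q+1)$ governed by the parities of $r$ and $t$, and the LCM bookkeeping --- all of which checks out, and which is in fact the same style of block/period analysis the paper itself deploys later for Theorem~\ref{t:rank3} and Theorem~\ref{t:mainb}. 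One small imprecision: when $r=0$ and $d$ is odd there is also one orbit of length $q$ (through the midpoint of $[1,d]$), so it is not literally true that only $2q$-orbits occur; since $d\ge 2$ guarantees at least one $2q$-orbit and $q\mid 2q$, the conclusion $m_{i-1,j-1}=2q$ is unaffected. What the paper's approach buys is brevity (a citation plus the parabolic-submatrix remark); what yours buys is a self-contained verification of the quoted lemma, at the price of carrying out the full case analysis yourself.
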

 		
 		\begin{proof}
 		For (\ref{c:diagonals}), since $f_{i}$ is an involution, it follows that $m_{i,i}=1$. 
 		
 		For (\ref{c:symmetry}) notice that $(f_{i}f_{j})^{-1} = f_{j}f_{i}$, so it follows that $m_{i,j}=m_{j,i}$, and therefore $M_{n-1}$ is symmetric. 
 		
 		Case (\ref{c:2,3}) Follows from direct computation. 
 		
 		For Case (\ref{c:lemma1}), notice that elements in $S_{n-1}$ can be viewed as elements in $S_n$ leaving $n$ fixed, the matrix $M_{n-1}$ can be viewed as a submatrix of $M_n$ by ignoring the last row and column of $M_n$. So Case (\ref{c:lemma1}) follows from~\cite[Lemma 1]{KM16} by having $n$ take different values. 
 		\end{proof}
 		
 		\begin{rem}
 		     We point out that in~\cite{KM16}, the authors use $r_j$ with $2\leq j\leq n$ to denote the permutation that reverses the first $j$ terms from the identity permutation $123\ldots n$. In other words, $r_j=f_{j-1}$, for $2\leq j\leq n$. However, our notation resembles the notation that is used for $S_n$ viewed as a Coxeter group generated by $S$, the set of adjacent transpositions. 
 		   		\end{rem}

	\begin{ex} In Figure~\ref{f:matrix}, we depict $19\times 19$ Coxeter matrix $M_{19}$ for $S_{20}$.
	\end{ex}
	
		We now describe the set of ``reflections'' with respect to $P$, that is, the conjugates of elements of $P$ by permutations. 
	
	\begin{thm}\label{t:reflec}
		If $T=\{w f_i w^{-1}\,|\, i \in [n], w \in S_n\}$, the set of conjugates of the pancake generators, then $T$ is the set of all involutions in $S_n$.
	\end{thm}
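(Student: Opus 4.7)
The plan is to reduce the claim to the classical fact that two permutations in $S_n$ are conjugate if and only if they have the same cycle type. Since each pancake generator $f_i$ is clearly an involution, every conjugate $w f_i w^{-1}$ is also an involution, which gives one inclusion immediately. For the reverse inclusion, I would show that the cycle types of the generators $f_1, \ldots, f_{n-1}$ exhaust all possible cycle types of non-identity involutions of $S_n$, after which conjugacy in $S_n$ finishes the argument.

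The core computation is the cycle decomposition of $f_i$, which was already written down in Section~\ref{s:notation}: $f_i$ is a product of $\lceil i/2 \rceil$ disjoint transpositions, with a single fixed point in $\{1,\dotsc,i+1\}$ exactly when $i$ is even. Hence the cycle type of $f_i$ is $2^k 1^{n-2k}$ with $k = \lceil i/2 \rceil$. As $i$ runs over $\{1,2,\dotsc,n-1\}$, the value $k = \lceil i/2 \rceil$ takes every integer value in $\{1, 2, \dotsc, \lfloor n/2 \rfloor\}$ (the pairs $i = 2k-1, 2k$ both yield the same $k$, and $i = n-1$ achieves $k = \lfloor n/2 \rfloor$ regardless of the parity of $n$). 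Thus every admissible involution cycle type $2^k 1^{n-2k}$ with $k \geq 1$ occurs as the cycle type of some $f_i$.

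To conclude, let $\sigma \in S_n$ be an involution different from the identity. Then $\sigma$ is a product of a nonzero number $k$ of disjoint transpositions and $n - 2k$ fixed points, so $\sigma$ has the same cycle type as $f_i$ for a suitable $i$ from the paragraph above. By the standard conjugacy classification in $S_n$, there exists $w \in S_n$ with $\sigma = w f_i w^{-1}$, so $\sigma \in T$. Combined with the easy inclusion $T \subseteq \{\text{involutions}\}$, this proves the theorem (with the usual convention that the identity is excluded from the set of involutions, since $e$ is not a conjugate of any $f_i$).

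I do not anticipate a real obstacle: the only subtlety is the bookkeeping that shows the multiset of cycle types $\{2^{\lceil i/2 \rceil} 1^{n-2\lceil i/2 \rceil} : 1 \leq i \leq n-1\}$ really does cover every $k$ from $1$ to $\lfloor n/2 \rfloor$, and that amounts to a one-line case distinction on the parity of $n$.
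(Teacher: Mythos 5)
Your proof is correct, and it rests on the same key observation as the paper's: the flip $f_{2k-1}$ (more generally $f_i$ with $\lceil i/2\rceil = k$) is a product of exactly $k$ disjoint transpositions, so its cycle type matches that of any involution with $k$ two-cycles, and $k$ ranges over all of $\{1,\dotsc,\lfloor n/2\rfloor\}$. The difference is in how the conjugation is witnessed: you invoke the standard theorem that cycle type determines conjugacy class in $S_n$, whereas the paper avoids citing that fact by writing down an explicit conjugator, namely $w = a_1 a_2 \ldots a_k b_k b_{k-1}\ldots b_1 w_{2k+1}\ldots w_n$ in one-line notation for a given involution $t=(a_1,b_1)\cdots(a_k,b_k)$, and verifying directly that $w f_{2k-1} = t w$. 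Your route is shorter and more conceptual (and also records which generators realize each class, since both $i=2k-1$ and $i=2k$ work), while the paper's is self-contained and produces the conjugating permutation explicitly. Your parenthetical about excluding the identity is consistent with the paper's reading of ``involution'' as an element of order exactly $2$ (its Corollary~\ref{c:reflecOrder} sums from $k=1$), and indeed $e\notin T$ since no $f_i$ is the identity; so there is no gap there.
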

	\begin{proof}
	    If $f_i\in P$ and $w\in S_n$, then $(wf_iw^{-1})^2=e$, so each element in $T$ is an involution. 
	
		Conversely, suppose $t$ is an arbitrary involution in $S_n$. Then the $t$ can be written in disjoint cycle notation using only length two cycles. Say $t=(a_1, b_1)(a_2, b_2)\cdots(a_k, b_k)$ with $a_1 < a_2 < \cdots < a_k$ and $a_i < b_i$ for all $i \in [k]$. We know that $k \leq \left\lfloor \dfrac{n}{2} \right\rfloor$ thus $2k-1 \leq 2 \left\lfloor \dfrac{n}{2} \right\rfloor -1 \leq n-1$. Consider the flip $$f_{2k-1} = (1,2k)(2,2k-1)\cdots(k,k+1)$$ which consists of $k$ disjoint two-cycles and $$w=a_1 a_2 \ldots a_k b_k b_{k-1} \ldots b_2 b_1 w_{2k+1} \ldots w_n\text{, in one-line notation,}$$  where $w_{2k+1} \ldots w_n$ is an arbitrary permutation of $[n] \setminus \{a_1, b_1, a_2, b_2, \ldots, a_k, b_k\}$.
		
		The element $t$ is in $T$ if $ wf_{2k-1} = tw$. Notice that $$ wf_{2k-1} = b_1 b_2 \ldots b_{k-1} b_k a_k a_{k-1} \ldots a_2 a_1 w_{2k+1} \ldots w_n.$$ Furthermore, $$ tw = b_1 b_2 \ldots b_{k-1} b_k a_k a_{k-1} \ldots a_2 a_1 w_{2k+1} \ldots w_n.$$ Therefore $t \in T$. 
	\end{proof}
	
	Since $|T|$ is the same as the number of involutions in $S_n$, we have the following corollary. 
	
	\begin{cor}\label{c:reflecOrder}
		$\left|T\right| = \displaystyle\sum\limits_{k=1}^{\lfloor n/2 \rfloor} \dfrac{n!}{2^k(n-2k)!k!}$.
	\end{cor}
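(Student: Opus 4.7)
The plan is to leverage Theorem~\ref{t:reflec}, which identifies $T$ with the set of involutions of $S_n$, and then to invoke the classical enumeration of involutions. First I would observe that every conjugate $wf_iw^{-1}$ of a pancake generator is a non-identity involution, since each $f_i$ itself is not the identity; combined with the theorem, this means $T$ is precisely the set of non-identity involutions of $S_n$.

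Next I would enumerate these involutions according to the number $k$ of $2$-cycles in their disjoint cycle decomposition. For a given $k \in \{1,\ldots,\lfloor n/2\rfloor\}$, an involution with exactly $k$ transpositions is determined by the choice of the $2k$ moved elements ($\binom{n}{2k}$ ways) together with a perfect matching of those $2k$ elements into $k$ unordered pairs ($(2k-1)!! = (2k)!/(2^k k!)$ such matchings). The product of these counts is $n!/(2^k k!(n-2k)!)$, which is exactly the $k$-th summand in the stated formula.

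Summing over $k$ from $1$ to $\lfloor n/2\rfloor$ then yields the corollary. I do not anticipate any real obstacle here; the only subtle point is to confirm that the sum correctly starts at $k=1$, reflecting the fact that the identity element (the unique involution with no transpositions) is not a conjugate of any pancake flip and therefore does not belong to $T$.
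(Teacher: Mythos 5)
Your proposal is correct and follows essentially the same route as the paper: invoke Theorem~\ref{t:reflec} to identify $T$ with the (non-identity) involutions of $S_n$, then apply the classical count of involutions with exactly $k$ transpositions, $\binom{n}{2k}\cdot\frac{(2k)!}{2^k k!}=\frac{n!}{2^k k!(n-2k)!}$, summed over $k\geq 1$. Your remark that the sum starts at $k=1$ because the identity is not a conjugate of any $f_i$ is a welcome precision that the paper leaves implicit.
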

	
	\begin{rem}
	In the symmetric group, every reflection, that is, every element of the form $wsw^{-1}$, where $s$ is an adjacent transposition and $w$ is a permutation, is an involution. However, there are involutions in the symmetric group that are not reflections. As Theorem~\ref{t:reflec} shows, if we use the pancake generators for the symmetric group, the ``reflections'' obtained are indeed \textbf{all} the involutions in $S_{n}$. 
	\end{rem}
	
\begin{center}
\begin{figure}[H]
	\begin{tikzpicture}
	 	\matrix[matrix of math nodes, left delimiter=(, right delimiter=), row sep=1pt, column sep=1pt] (m){ 
		1  & 3  & 4  & 4  & 4  & 4  & 4  & 4  & 4  & 4  & 4  & 4  & 4  & 4  & 4  & 4  & 4  & 4  & 4 \\ 
		3  & 1  & 4  & 6  & 4  & 4  & 4  & 4  & 4  & 4  & 4  & 4  & 4  & 4  & 4  & 4  & 4  & 4  & 4 \\ 
		4  & 4  & 1  & 5  & 6  & 12 & 4  & 4  & 4  & 4  & 4  & 4  & 4  & 4  & 4  & 4  & 4  & 4  & 4 \\ 
		4  & 6  & 5  & 1  & 6  & 12 & 6  & 12 & 4  & 4  & 4  & 4  & 4  & 4  & 4  & 4  & 4  & 4  & 4 \\ 
		4  & 4  & 6  & 6  & 1  & 7  & 8  & 6  & 12 & 12 & 4  & 4  & 4  & 4  & 4  & 4  & 4  & 4  & 4 \\ 
		4  & 4  & 12 & 12 & 7  & 1  & 8  & 20 & 12 & 6  & 12 & 12 & 4  & 4  & 4  & 4  & 4  & 4  & 4 \\ 
		4  & 4  & 4  & 6  & 8  & 8  & 1  & 9  & 10 & 24 & 6  & 12 & 12 & 12 & 4  & 4  & 4  & 4  & 4 \\ 
		4  & 4  & 4  & 12 & 6  & 20 & 9  & 1  & 10 & 30 & 8  & 12 & 6  & 12 & 12 & 12 & 4  & 4  & 4 \\ 		
		4  & 4  & 4  & 4  & 12 & 12 & 10 & 10 & 1  & 11 & 12 & 40 & 24 & 6  & 12 & 12 & 12 & 12 & 4 \\ 
		4  & 4  & 4  & 4  & 12 & 6  & 24 & 30 & 11 & 1  & 12 & 42 & 20 & 24 & 12 & 6  & 12 & 12 & 12\\ 
		4  & 4  & 4  & 4  & 4  & 12 & 6  & 8  & 12 & 12 & 1  & 13 & 14 & 10 & 8  & 24 & 6  & 12 & 12\\ 
		4  & 4  & 4  & 4  & 4  & 12 & 12 & 12 & 40 & 42 & 13 & 1  & 14 & 56 & 30 & 40 & 24 & 12 & 6 \\ 
		4  & 4  & 4  & 4  & 4  & 4  & 12 & 6  & 24 & 20 & 14 & 14 & 1  & 15 & 16 & 60 & 40 & 24 & 24\\ 
		4  & 4  & 4  & 4  & 4  & 4  & 12 & 12 & 6  & 24 & 10 & 56 & 15 & 1  & 16 & 72 & 12 & 20 & 8 \\ 
		4  & 4  & 4  & 4  & 4  & 4  & 4  & 12 & 12 & 12 & 8  & 30 & 16 & 16 & 1  & 17 & 18 & 84 & 10\\ 
		4  & 4  & 4  & 4  & 4  & 4  & 4  & 12 & 12 & 6  & 24 & 40 & 60 & 72 & 17 & 1  & 18 & 90 & 42\\ 
		4  & 4  & 4  & 4  & 4  & 4  & 4  & 4  & 12 & 12 & 6  & 24 & 40 & 12 & 18 & 18 & 1  & 19 & 20\\ 
		4  & 4  & 4  & 4  & 4  & 4  & 4  & 4  & 12 & 12 & 12 & 12 & 24 & 20 & 84 & 90 & 19 & 1  & 20\\ 
		4  & 4  & 4  & 4  & 4  & 4  & 4  & 4  & 4  & 12 & 12 & 6  & 24 & 8  & 10 & 42 & 20 & 20 & 1  \\
		};
		\end{tikzpicture}
	\caption{Pancake Matrix $M_{19}$ for $S_{20}$. Notice that the matrix is symmetric, the entries in the main diagonal are all 1 and the entries in the off-diagonal are the positive integers that are at least 3.}
	\label{f:matrix}
\end{figure}
\end{center}

	\subsection{Order of $f_{i}f_{j}f_{k}$}
	
	We now discuss the order $m_{i,j,k}$ of $f_{i}f_{j}f_{k}$. It turns out that all we need is to understand the order in the case $i\leq j\leq k$, as the order of $f_{\sigma(i)}f_{\sigma(j)}f_{\sigma(k)}$ is also $m_{i,j,k}$, as shown in the following lemma. 
	
	\begin{lem}\label{l:rank3permute}
	For all $i,j,k$ with $1\leq i,j,k\leq n$ and any permutation $\sigma$ of $\{i,j,k\}$, the order of $f_{i}f_{j}f_{k}$ is the same as the order of $f_{\sigma(i)}f_{\sigma(j)}f_{\sigma(k)}$.
	\end{lem}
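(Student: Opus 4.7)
The plan is to exploit two elementary facts from group theory: conjugate elements have the same order, and an element has the same order as its inverse. Since each $f_i$ is an involution (so $f_i^{-1}=f_i$), the inverse of any three-fold product $f_af_bf_c$ is $f_cf_bf_a$, which immediately pairs up the six orderings of $\{i,j,k\}$ into three pairs of the same order.

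Next I would handle cyclic rotations. For any $a,b,c$,
\[
f_b f_c f_a \;=\; f_a^{-1}(f_a f_b f_c) f_a,
\]
so $f_bf_cf_a$ is conjugate to $f_af_bf_c$ and hence has the same order; iterating gives that the three cyclic rotations $f_af_bf_c$, $f_bf_cf_a$, $f_cf_af_b$ all share the same order.

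Now I would combine the two observations. Starting from $f_if_jf_k$, the cyclic rotations account for $f_jf_kf_i$ and $f_kf_if_j$. Inverting yields $f_kf_jf_i$, and its cyclic rotations give $f_jf_if_k$ and $f_if_kf_j$. These six products are precisely $f_{\sigma(i)}f_{\sigma(j)}f_{\sigma(k)}$ for all $\sigma\in\mathfrak{S}_{\{i,j,k\}}$, and each has the same order as $f_if_jf_k$, establishing the lemma.

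There is no real obstacle here: the argument is formal and uses only that the $f_i$ are involutions together with the invariance of order under conjugation and inversion. The only thing to be careful about is checking that the orbit of $f_if_jf_k$ under these two operations really exhausts all six orderings, which is a direct enumeration as indicated above.
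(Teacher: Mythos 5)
Your argument for the main case is exactly the paper's: cyclic rotations of $f_af_bf_c$ are conjugates, reversal is the inverse (because each $f_i$ is an involution), and these two operations connect all six orderings of three \emph{distinct} indices. For $|\{i,j,k\}|=3$ your proof and the paper's coincide.

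There is, however, a small gap in the degenerate case, which the statement allows ($1\leq i,j,k\leq n$ with no distinctness assumption) and which the paper treats separately. When the indices are not all distinct, your closing claim that the rotation/inversion orbit of $f_if_jf_k$ ``exhausts all six orderings'' is false: for example, if $i=j\neq k$ and $\sigma$ swaps $i$ and $k$, then $f_{\sigma(i)}f_{\sigma(j)}f_{\sigma(k)}=f_kf_kf_i=f_i$, and the word $(k,k,i)$ does not lie in the rotation/reversal orbit of $(i,i,k)$; similarly $(j,i,j)$ is not in the orbit of $(i,j,i)$. The lemma still holds there, but for a different (trivial) reason: whenever two of the indices coincide, every product $f_{\sigma(i)}f_{\sigma(j)}f_{\sigma(k)}$ reduces to a single flip $f_a$ or a conjugate $f_bf_af_b$ of one, hence has order $2$, which is how the paper disposes of the case $|\{i,j,k\}|<3$. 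Adding that one observation closes the gap; the rest of your proof is fine.
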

	\begin{proof}
	There are two cases to consider.
	\begin{description}
	\item[Case $|\{i,j,k\}|< 3$] In this case, the order of $f_{i}f_{j}f_{k}$ is 2. Indeed, if $|\{i,j,k\}|=2$ then $f_{i}f_{j}f_{k}$ has the form 
	$f_{a}f_{a}f_{b}$ or $f_{a}f_{b}f_{a}$ or $f_{b}f_{a}f_{a}$, for $a,b\in\{i,j,k\}$, all of which have order two. Furthermore, if $|\{i,j,k\}|=1$, then $f_{i}f_{i}f_{i}=f_{i}$, which also has order 2. 
	\item[Case $|\{i,j,k\}|= 3$] Notice that $f_{i}f_{j}f_{k}, f_{j}f_{k}f_{i}$, and  $f_{k}f_{i}f_{j}$ are in the same conjugacy class of $S_{n}$; for example, $f_{k}f_{i}f_{j}=f_{k}(f_{i}f_{j}f_{k})f_{k}$. Therefore they all have the same order as they have the same cycle structure. Moreover, $f_{k}f_{j}f_{i}, f_{i}f_{k}f_{j},$ and $f_{j}f_{i}f_{k}$ are in the same conjugacy class, and so they have the same order as well. Since $f_{k}f_{j}f_{i}=(f_{i}f_{j}f_{k})^{-1}$, the lemma follows.
		\end{description}
	\end{proof}
	
	Here are a collection of partial results on the orders of elements of the form $f_{i} f_{j} f_{k}$. Specifically these are all of the relations where the leftmost generator is $f_1$, i.e. all of the orders of $f_1 f_j f_k$.

    \begin{thm}\label{t:rank3}
    If $m_{1,j-1,k-1}$ is the order of $f_1f_{j-1}f_{k-1}$ with $1<j<k\leq n$, then 
        \begin{enumerate}
            \item\label{i:3involutions} $m_{1,1,j-1}=m_{1,j-1,j-1}=2$,
            
            \item\label{i:sixes} if $j \geq 6$ then $m_{1,2,j-1}=6$,
            
            \item\label{i:diff1} if $j=k-1$ then $m_{1,j-1,k-1}=k-1$,
            
            \item\label{i:diff2} if $j=k-2$ and $k$ is odd or $j=k-3$ and $2 \neq k \pmod{3}$ then $m_{1,j-1,k-1}=k$,
            
            \item\label{i:lcmcases} if $k \geq 5$ then
                \begin{enumerate}
                    \item\label{i:rlessthan2} $m_{1,j-1,k-1}=
                        \begin{cases} 
                            4q, &\text{if } r=0, d \geq 4;\\
                            2q+1, &\text{if } r=1, d=2;\\
                            q(3q+1), &\text{if } r=1, d=4 \text{ or } r=1, d \geq 5, q \text{ is odd};\\
                            2q(3q+1), &\text{if } r=1, d \geq 5, q \text{ is even}.
                        \end{cases}$
                    \item\label{i:rgreaterthan2} $m_{1,j-1,k-1}=
                        \begin{cases}
                            q(q+1), &\text{if } r=2,d=3, \text{ or } r=2, d \geq 4, q \text{ is odd}, \text{ or }\\& r=3, d=4, 0=q\pmod{3}, \text{ or } r=3, d \geq 5,\\& q=3\pmod{6}, \text{ or } r \geq 4, d \geq 5, 0= q\pmod{4};\\
                            2q(q+1), &\text{if } r=2, d \geq 4, q \text{ is even}, \text{ or } r=3, d \geq 5,\\& 0=q\pmod{6}, \text{ or } r \geq 4, d \geq 5, 2=q\pmod{4};\\
                            3q(q+1), &\text{if } r=3, d=4, 0 \neq q\pmod{3}, \text{ or } r=3, d \geq 5,\\& 1,5=q\pmod{6};\\
                            4q(q+1), &\text{if } r \geq 4, d \geq 5, q \text{ is odd};\\
                            6q(q+1), &\text{if } r=3, d \geq 5,\, 2,4=q\pmod{6};
                        \end{cases}$
                \end{enumerate}
                where $d=k-j, q=\left\lfloor\frac{k}{d}\right\rfloor,$ and $r=k\pmod{d}.$
        \end{enumerate}
    \end{thm}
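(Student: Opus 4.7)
The key observation is that $\pi\defeq f_1 f_{j-1} f_{k-1}$ equals $(1,2)\cdot\sigma$, where $\sigma\defeq f_{j-1} f_{k-1}$ and $f_1=(1,2)$ as a transposition. Left multiplication by a transposition splits a cycle when the two transposed points lie in the same cycle (into two cycles whose lengths sum to the original), and merges two cycles into one of summed length when the points lie in different cycles; all other cycles are unchanged. Hence $m_{1,j-1,k-1}$ is the lcm of the cycle lengths of the permutation obtained from $\sigma$ by applying the appropriate split-or-merge at $\{1,2\}$. The analysis thus reduces to understanding (a)~the cycle structure of $\sigma$, which is essentially the content of Theorem~\ref{t:main}(4b), together with (b)~the location of $1$ and $2$ inside that structure.

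A direct computation shows that, with $d = k-j$,
\[
\sigma(i) = \begin{cases} k+1-i & \text{if } 1 \le i \le d, \\ i - d & \text{if } d+1 \le i \le k,\end{cases}
\]
and $\sigma$ fixes every index greater than $k$. Writing $k = qd + r$ with $0 \le r < d$, iterating $\sigma$ amounts to a sequence of downward shifts by $d$ interrupted by the reflection $i\mapsto k+1-i$ whenever the value falls into $[1,d]$. The resulting cycles on $[k]$ have lengths in $\{q, q+1, 2q, 2(q+1)\}$, with multiplicities governed by the same arithmetic conditions on $(q,r,d)$ that appear in Theorem~\ref{t:main}(4b).

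The plan is then: (i)~for each sub-case of part~(5) record the cycle decomposition of $\sigma$ on $[k]$; (ii)~identify the cycle(s) of $\sigma$ containing $1$ and $2$, noting that since $\sigma(1)=k$ and $\sigma(2)=k-1$ these two orbits march downward in lockstep until one of them first enters $[1,d]$, which determines whether $1$ and $2$ share a $\sigma$-cycle and, if so, their $\sigma$-distance; (iii)~apply the split-or-merge rule to obtain the cycle decomposition of $\pi$; (iv)~take the lcm of the resulting cycle lengths.

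Parts (1)--(4) fall out of this framework in restricted regimes ($q=1$, or $d\ge k/2$, or $|k-j|\le 3$), where $\sigma$ has only a handful of cycles and steps (i)--(iv) reduce to direct tracing: in (1) the degenerate identities $f_1 f_1 f_{j-1} = f_{j-1}$ and $f_1 f_{j-1} f_{j-1} = f_1$ give the answer immediately, while in (2)--(4) the orbit of $1$ under $\pi$ already sweeps out every moved point. The main obstacle is part~(5). For each combination of residue class of $r$ (modulo $2$ or $3$) and parity class of $q$ (modulo $4$ or $6$) one must determine (a)~whether $1$ and $2$ lie in the same $\sigma$-cycle, (b)~the length(s) of that cycle or those cycles, and (c)~which of the other $\sigma$-cycles survive and which new prime factors they contribute to the final lcm. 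Each listed formula $c\,q(q+1)$ with $c\in\{1,2,3,4,6\}$ encodes one such configuration: the base factor $q(q+1)$ arises from a pair of surviving $\sigma$-cycles of coprime lengths $q$ and $q+1$, and the extra factor $c$ is picked up from a merged or preserved cycle of length $2q$, $2(q+1)$, $3q$, or $3q+1$ contributing one of the primes $2$ or $3$. The verification is mechanical once the three-step framework is in place, but the sheer number of sub-cases and the delicate interaction between the parity of $q$ and the value of $r\bmod d$ in determining which configuration arises make the proof lengthy.
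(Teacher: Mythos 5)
Your reduction $\pi=f_1f_{j-1}f_{k-1}=(1,2)\,\sigma$ with $\sigma=f_{j-1}f_{k-1}$, plus the split-or-merge rule for multiplying by a transposition, is a legitimate framework, and parts (\ref{i:3involutions}), (\ref{i:diff1}), (\ref{i:diff2}) do come out of direct tracing as you indicate. One factual slip already appears in part (\ref{i:sixes}): for $j\geq 6$ the permutation $f_1f_2f_{j-1}$ is a $6$-cycle together with several disjoint transpositions such as $(4,j-3)$, so the orbit of $1$ does \emph{not} sweep out every moved point; the order $6$ is only obtained as the lcm of $6$ with the $2$-cycles, which signals that the cases were not actually computed.

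The genuine gap is in part (\ref{i:lcmcases}), which is the substance of the theorem. Your step (i) presupposes the full disjoint cycle decomposition of $\sigma=f_{j-1}f_{k-1}$ --- the lengths, their multiplicities, and the $\sigma$-cycles containing $1$ and $2$ --- but Theorem~\ref{t:main}(\ref{c:lemma1}) records only the \emph{order} of $\sigma$, not its cycle type, so it cannot serve as the input you cite; you would have to derive that decomposition from scratch, which is exactly the kind of explicit computation the paper performs directly on $\pi$ (writing $f_1f_{j-1}f_{n-1}$ in two-line form as in \eqref{e:3gens} and listing all cycles in each of the eleven sub-cases indexed by $(r,d)$, then taking lcms). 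You never carry out this case analysis: deciding, for each residue/parity combination of $(r,d,q)$, which configuration occurs and hence which constant $c\in\{1,2,3,4,6\}$ multiplies $q(q+1)$ is precisely what the theorem asserts, and declaring it ``mechanical'' leaves the proof incomplete. Moreover some of the scaffolding claims are unverified or wrong as stated: the asserted cycle-length set $\{q,q+1,2q,2(q+1)\}$ for $\sigma$ and the claim that the exceptional cycle only contributes the primes $2$ or $3$ do not cover the $r=1$ cases, where the merged cycle has length $3q+1$ and the answer is $q(3q+1)$ or $2q(3q+1)$, not a multiple of $q(q+1)$. As it stands the proposal is a plausible plan, not a proof.
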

    
    \begin{proof}
        For Case (\ref{i:3involutions}), note that for any $1 \leq i \leq n-1$, $(f_i)^2=e$. So $f_1f_1f_{j-1}=f_{j-1}$, which is order two, and $f_1f_{j-1}f_{j-1}=f_1$, which is also order two.
        
        For each of the following cases we will look at the disjoint cycle notation of the permutations to find the order of the three generators. 
        
        For Case (\ref{i:sixes}), let $j \geq 6$.
            \begin{align*}
                f_1 f_2 f_{j-1} &= \left( \begin{matrix} 1 & 2 & 3 & 4 & \cdots & j\\ j-1 & j-2 & j & j-3 & \cdots & 1 \end{matrix} \right)\\ &= \left( 1, j-1, 2, j-2, 3, j \right) (4, j-3) \ldots \left( \left\lfloor \dfrac{j+1}{2} \right\rfloor, \left\lceil \dfrac{j+1}{2} \right\rceil \right).         
            \end{align*}
        The least common multiple of these lengths is 6, which is the order of the permutation.
        
        Since elements in $S_n$ is a parabolic subgroup of $S_{n+1}$, generated by all but the largest indexed generator, then the matrix of $\left(m_{1,j-1,k-1}\right)_{1 \leq j,k \leq n}$ is a submatrix of the matrix $\left(m_{1,j-1,k-1}\right)_{1 \leq j,k \leq n+1}$ with the last row and last column removed. Thus it is sufficient to consider only the cases with $k=n$.
        
        For Case (\ref{i:diff1}), the three generators result in
            \begin{align*}
                f_1 f_{j-1} f_{n-1} &= \left( \begin{matrix} 1 & 2 & 3 & \cdots & n-1 & n\\ 3 & 2 & 4 & \cdots & n & 1 \end{matrix} \right)\\
                &= (1, 3, 4, \ldots, n). 
            \end{align*}
        The length of this disjoint cycle is $n-1$. Thus the order is $n-1$.
        
        For Case (\ref{i:diff2}), first consider $j=n-2$ and $n$ is odd. 
            \begin{align*}
                f_1 f_{j-1} f_{n-1} &= \left( \begin{matrix} 1 & 2 & 3 & \cdots & n-2 & n-1 & n\\ 4 & 3 & 5 & \cdots & n & 2 & 1 \end{matrix} \right)\\
                &= (1, 4, 6, \ldots, n-1, 2, 3, 5, \ldots, n),
            \end{align*}
        whose cycle length is $n$. Second consider $j=n-3$ and $2 \neq n\pmod{3}$. Say $n=3\mathfrak{q} + \mathfrak{r}$ with $\mathfrak{r} = 0$ or 1.
            \begin{align*}
                f_1 f_{j-1} f_{n-1} &= \left( \begin{matrix} 1 & 2 & 3 & \cdots & n-3 &  n-2 & n-1 & n\\ 5 & 4 & 6 & \cdots & n & 3 & 2 & 1 \end{matrix} \right)\\
                &= (1, 5, 8, \ldots, 3(\mathfrak{q}-1)+2, 2, 4, 7, \ldots, 3(\mathfrak{q}-1)+1, 3, 6, \ldots, 3\mathfrak{q}), \text{ when } \mathfrak{r}=0.\\
                f_1 f_{j-1} f_{n-1} &= (1, 5, 8, \ldots, 3(\mathfrak{q}-1)+2, 3, 6, 9, \ldots, 3\mathfrak{q}, 2, 4, 7, \ldots 3\mathfrak{q}+1), \text{ when } \mathfrak{r}=1.
            \end{align*}
        With both possible $\mathfrak{r}$ the length of the disjoint cycle is $n$.
        
        For Case (\ref{i:lcmcases}) let $d=n-j$, $q=\left\lfloor\frac{n}{d}\right\rfloor$, and $r=n\pmod{d}$. We will consider each possible value of $r$ followed by the possibilities of $d$. Since $n=qd+r$ and $d=n-j$, then $j=(q-1)d+r$. In general, three generators form the permutation 
            \begin{equation}\label{e:3gens}
                f_1 f_{j-1} f_{n-1} = \left( \begin{matrix} 1 & 2 & 3 & \cdots & (q-1)d+r & (q-1)d+r+1 & \cdots & qd+r\\ d+2 & d+1 & d+3 & \cdots & qd+r & d & \cdots & 1 \end{matrix} \right)
            \end{equation}
        
        Say $r=0$, then in disjoint cycle notation we have that 
            \begin{align*}
                (\ref{e:3gens}) &= (1, d+2, 2d+2, \ldots, (q-1)d+2, d-1, 2d-1, \ldots, qd-1,\\ &\qquad 2, d+1, 2d+1, \ldots, (q-1)d+1, d, 2d, \ldots, qd)\\
                &\quad (3, d+3, 2d+3, \ldots, (q-1)d+3, d-2, 2d-2, \ldots, qd-2)\\
                &\quad (4, d+4, 2d+4, \ldots, (q-1)d+4, d-3, 2d-3, \ldots, qd-3)\\
                &\qquad \vdots\\
                &\quad \Bigg(\left\lfloor\dfrac{d}{2}\right\rfloor, d+\left\lfloor\dfrac{d}{2}\right\rfloor, \ldots, (q-1)d+\left\lfloor\dfrac{d}{2}\right\rfloor, \left\lceil\dfrac{d}{2}\right\rceil+1, d+\left\lceil\dfrac{d}{2}\right\rceil+1, \\
                &\qquad \ldots,(q-1)d+\left\lceil\dfrac{d}{2}\right\rceil+1\Bigg)\\
                &\quad \left(\left\lfloor\dfrac{d}{2}\right\rfloor+1, d+\left\lfloor\dfrac{d}{2}\right\rfloor+1, \ldots, (q-1)d+\left\lfloor\dfrac{d}{2}\right\rfloor+1\right).
            \end{align*}
        The first cycle is of length $4q$, the next $\left\lfloor\frac{d}{2}\right\rfloor - 2$ cycles are of length $2q$, and the last cycle is length $q$. The least common multiple of these lengths is $4q$, the order of the permutation. To verify that these are all the disjoint cycles we can see that the number of characters affected is $$4q+2q\left(\left\lfloor\frac{d}{2}\right\rfloor-2\right)+q=4q+q(d-1-4)+q=qd=n.$$
        
        Say $r=1$ and $d=2$, then in disjoint cycle notation we have that 
            \begin{align*}
                (\ref{e:3gens}) &= (1,4,6,\ldots,2q,2,3,5,\ldots,2q+1).
            \end{align*}
        All of the characters are part of this cycle, since $n=2q+1$, and therefore the order of the permutation is $2q+1$.
        
        Say $r=1$ and $d=4$, then the disjoint cycle notation is
            \begin{align*}
                (\ref{e:3gens}) &= (1, 6, 10, \ldots, 4q-2, 4, 8, \ldots, 4q, 2, 5, 9, \ldots, 4q+1)\\
                &\quad (3, 7, \ldots, 4q-1).
            \end{align*}
        The length of the first cycle is $3q+1$ and the second cycle is of length $q$. The least common multiple of the cycle lengths is $q(3q+1)$. All of the characters are part of this cycle, since $n=4q+1$.
        
        Say $r=1$ and $d \geq 5$. The disjoint cycles are
            \begin{align*}
                (\ref{e:3gens}) &= (1, d+2, 2d+2, \ldots, (q-1)d+2, d, 2d, \ldots, qd, 2, d+1, 2d+1, \ldots, qd+1)\\
                &\quad \left(3, d+3, \ldots, (q-1)d+3, d-1, 2d-1, \ldots, qd-1 \right)\\
                &\quad \left(4, d+4, \ldots, (q-1)d+4, d-2, 2d-2, \ldots, qd-2 \right)\\
                &\qquad \vdots\\
                &\quad \Bigg(\left\lfloor\dfrac{d+2}{2}\right\rfloor, d+\left\lfloor\dfrac{d+2}{2}\right\rfloor, \ldots, (q-1)d+\left\lfloor\dfrac{d+2}{2}\right\rfloor, \left\lceil\dfrac{d+2}{2}\right\rceil, d+\left\lceil\dfrac{d+2}{2}\right\rceil, \\
                &\qquad \ldots, (q-1)d+\left\lceil\dfrac{d+2}{2}\right\rceil \Bigg).
            \end{align*}
        The first cycle is of length $3q+1$ and the other $\left\lfloor\frac{d+2}{2}\right\rfloor-2$ cycles are length $2q$. When $q$ is even the least common multiple of these lengths is $2q(3q+1)$. When $q$ is odd, $3q+1$ is even, and thus the least common multiple of the lengths is $q(3q+1)$. All of the disjoint cycles are accounted for since the total number of characters in the cycles is $$3q+1 + 2q\left(\left\lfloor\dfrac{d+2}{2}\right\rfloor -2\right) = 3q+1+q(d+1-4)=qd+1.$$ 
        
        Say $r=2$ and $d=3$, then the disjoint cycles are 
            \begin{align*}
                (\ref{e:3gens}) &= (1,5,8,\ldots,3q+2)\,\,(2,4,7,\ldots,3q+1)\,\,(3,6,\ldots,3q).
            \end{align*}
        The first two cycles are of length $q+1$ and the last cycle is length $q$. The least common multiple of these lengths is $q(q+1)$. It is also clear that these are all the cycles since the sum of the cycle lengths is $3q+2=n$.
        
        Say $r=2$ and $d \geq 4$. The disjoint cycles are 
            \begin{align*}
                (\ref{e:3gens}) &= (1, d+2, 2d+2, \ldots, qd+2) \,\,(2, d+1, 2d+1, \ldots, qd+1)\\
                &\quad (3, d+3, \ldots, (q-1)d+3, d, 2d, \ldots, qd)\\
                &\quad (4, d+4, \ldots, (q-1)d+4, d-1, 2d-1, \ldots, qd-1)\\
                &\qquad \vdots\\
                &\quad \Bigg(\left\lfloor\dfrac{d+3}{2}\right\rfloor, d+\left\lfloor\dfrac{d+3}{2}\right\rfloor, \ldots, (q-1)d+\left\lfloor\dfrac{d+3}{2}\right\rfloor, \left\lceil\dfrac{d+3}{2}\right\rceil, d+\left\lceil\dfrac{d+3}{2}\right\rceil,\\
                &\qquad\ldots, (q-1)d+\left\lceil\dfrac{d+3}{2}\right\rceil\Bigg).
            \end{align*}
        The first two cycles are length $q+1$ and the last $\left\lfloor\frac{d+3}{2}\right\rfloor-2$ cycles are of length $2q$. When $q$ is odd then the least common multiple of the lengths is $q(q+1)$. When $q$ is even then the least common multiple of the lengths is $2q(q+1)$. These are all of the disjoint cycles since the number of characters in them is $$2q+2+2q\left(\left\lfloor\dfrac{d+3}{2}\right\rfloor-2\right)=2q+2+q(d+2-4)=qd+2=n.$$
        
        Say $r=3$ and $d=4$. The disjoint cycles are
            \begin{align*}
                (\ref{e:3gens}) &= (1,6,10,\ldots,4q+2,2,5,9,\ldots,4q+1,3,7,\ldots,4q+3)\\
                &\quad (4,8,\ldots, 4q).
            \end{align*}
        The first cycle is of length $3q+3$ and the second cycle is of length $q$. If $q$ is a multiple of $3$, then the least common multiple is $q(q+1)$. If $q$ is not a multiple of $3$, then the least common multiple is $3q(q+1)$. The number of characters in both cycles is $4q+3=n$.
        
        Say $r=3$ and $d \geq 5$. The disjoint cycles are 
            \begin{align*}
                (\ref{e:3gens}) &= (1, d+2, 2d+2, \ldots, qd+2, 2, d+1, 2d+1, \ldots, qd+1, 3, d+3, \ldots, qd+3)\\
                &\quad (4, d+4, \ldots, (q-1)d+4, d, 2d, \ldots, qd)\\
                &\quad (5, d+5, \ldots, (q-1)d+5, d-1, 2d-1, \ldots, qd-1)\\
                &\qquad \vdots \\
                &\quad \Bigg(\left\lfloor\dfrac{d+4}{2}\right\rfloor, d+\left\lfloor\dfrac{d+4}{2}\right\rfloor, \ldots, (q-1)d+\left\lfloor\dfrac{d+4}{2}\right\rfloor, \left\lceil\dfrac{d+4}{2}\right\rceil, d+\left\lceil\dfrac{d+4}{2}\right\rceil,\\
                &\qquad \ldots, (q-1)d+\left\lceil\dfrac{d+4}{2}\right\rceil\Bigg).
            \end{align*}
        The first cycle is of length $3q+3$ and the remaining $\left\lfloor\frac{d+4}{2}\right\rfloor-3$ cycles are of length $2q$. When $q$ is odd and divisible by $3$ the least common multiple is $q(q+1)$. When $q$ divisible by $6$ the least common multiple is $2q(q+1)$. When $q$ is odd and not divisible by $3$ the least common multiple is $3q(q+1)$. When $q$ is even but not divisible by $3$ the least common multiple is $6q(q+1)$. The number of characters in all of the cycles is $$3q+3+2q\left(\left\lfloor\dfrac{d+4}{2}\right\rfloor-3\right)=3q+3+q(d+3-6)=qd+3=n.$$
        
        Say $r=4$ and $d=5$. The disjoint cycles are
            \begin{align*}
                (\ref{e:3gens}) &= (1,7,12,\ldots,5q+2,3,8,13,\ldots,5q+3,2,6,11,\ldots,5q+1,4,9,\ldots,5q+4)\\
                &\quad (5,10,\ldots, 5q).
            \end{align*}
        The first cycle is of length $4q+4$ and the second cycle is of length $q$. If $q$ is a multiple of $4$, then the least common multiple is $q(q+1)$. If $q$ is even but not a multiple of $4$, then the least common multiple is $2q(q+1)$. If $q$ is odd, then the least common multiple is $4q(q+1)$. The number of characters in both cycles is $5q+4=n$.
        
        Say $r=4$ and $d \geq 6$. The disjoint cycles are 
            \begin{align*}
                (\ref{e:3gens}) &= (1, d+2, 2d+2, \ldots, qd+2, 3, d+3, \ldots, qd+3, 2, d+1, 2d+1,\\
                &\qquad \ldots, qd+1, 4, d+4, \ldots, qd+4)\\
                &\quad (5, d+5, \ldots, (q-1)d+5, d, 2d, \ldots, qd)\\
                &\quad (6, d+6, \ldots, (q-1)d+6, d-1, 2d-1, \ldots, qd-1)\\
                &\qquad \vdots \\
                &\quad \Bigg(\left\lfloor\dfrac{d+5}{2}\right\rfloor, d+\left\lfloor\dfrac{d+5}{2}\right\rfloor, \ldots, (q-1)d+\left\lfloor\dfrac{d+5}{2}\right\rfloor, \left\lceil\dfrac{d+5}{2}\right\rceil, d+\left\lceil\dfrac{d+5}{2}\right\rceil,\\
                &\qquad \ldots, (q-1)d+\left\lceil\dfrac{d+5}{2}\right\rceil\Bigg).
            \end{align*}
        The first cycle is of length $4q+4$ and the remaining $\left\lfloor\frac{d+5}{2}\right\rfloor-4$ cycles are of length $2q$. When $q$ is even and not divisible by $4$ the least common multiple is $2q(q+1)$. When $q$ is odd the least common multiple is $4q(q+1)$. The number of characters in all of the cycles is $$4q+4+2q\left(\left\lfloor\dfrac{d+5}{2}\right\rfloor-4\right)=4q+4+q(d+4-8)=qd+4=n.$$
        
        Say $r \geq 5$ and $d \geq 6$. The disjoint cycles are 
            \begin{align*}
                (\ref{e:3gens}) &= (1, d+2, 2d+2, \ldots, qd+2, r-1, d+r-1, \ldots, qd+r-1, \\
                &\qquad 2, d+1, 2d+1, \ldots, qd+1, r, d+r, \ldots, qd+r)\\
                &\quad (3, d+3, \ldots, qd+3, r-2, d+r-2, \ldots, qd+r-2)\\
                &\quad (4, d+4, \ldots, qd+4, r-3, d+r-3, \ldots, qd+r-3)\\
                &\qquad \vdots \\
                &\quad \Bigg(\left\lfloor\dfrac{r+1}{2}\right\rfloor, d+\left\lfloor\dfrac{r+1}{2}\right\rfloor, \ldots, qd+\left\lfloor\dfrac{r+1}{2}\right\rfloor, \left\lceil\dfrac{r+1}{2}\right\rceil, d+\left\lceil\dfrac{r+1}{2}\right\rceil,\\
                &\qquad \ldots, qd+\left\lceil\dfrac{r+1}{2}\right\rceil\Bigg)\\
                &\quad (r+1, d+r+1, \ldots, (q-1)d+r+1, d, 2d, \ldots, qd)\\
                &\quad (r+2, d+r+2, \ldots, (q-1)d+r+2, d-1, 2d-1, \ldots, qd-1)\\
                &\qquad \vdots\\
                &\quad \Bigg(\left\lfloor\dfrac{d+r+1}{2}\right\rfloor, d+\left\lfloor\dfrac{d+r+1}{2}\right\rfloor, \ldots, (q-1)d+\left\lfloor\dfrac{d+r+1}{2}\right\rfloor, \left\lceil\dfrac{d+r+1}{2}\right\rceil,\\
                &\qquad d+\left\lceil\dfrac{d+r+1}{2}\right\rceil, \ldots, (q-1)d+\left\lceil\dfrac{d+r+1}{2}\right\rceil\Bigg).
            \end{align*}
        The first cycle is of length $4q+4$, the next $\left\lfloor\frac{r+1}{2}\right\rfloor-2$ cycles are each length $2q+2$, and the last $\left\lfloor\frac{d+r+1}{2}\right\rfloor-r$ cycles are of length $2q$. When $q$ is even and not divisible by $4$ the least common multiple is $2q(q+1)$. When $q$ is odd the least common multiple is $4q(q+1)$. The number of characters in all of the cycles is 
            \begin{align*}
                4q&+4+2(q+1)\left(\left\lfloor\dfrac{r+1}{2}\right\rfloor-2\right)+2q\left(\left\lfloor\dfrac{d+r+1}{2}\right\rfloor-r\right)\\
                &=4q+4+(q+1)(r-4)+q(d+r-2r) = qd+r=n.
            \end{align*}
    \end{proof}
	
	   \begin{rem}
	   Although Theorem~\ref{t:rank3} and Lemma~\ref{l:rank3permute} fully characterize the order of elements of the form $f_1f_jf_k$, we do not cover all possible combinations of three generators $f_if_jf_k$. In order to record all such orders we propose a generalization of the ``Coxeter matrix'' which we refer to as the \emph{Coxeter tensor} of rank $r$. In essence it would be a matrix of matrices or vectors depending on the parity of the rank. For three generators the object recording all of the orders would be the Coxeter 3-tensor, visualized as a cube. Using Lemma \ref{l:rank3permute} we intend to look at the order of the principal tetrahedron consisting of only entries with increasing indices. Theorem \ref{t:rank3} would be the base of said tetrahedron.
	   \end{rem}
	    
		\begin{ex} In Figure~\ref{f:3matrix}, we depict $24\times 24$ matrix whose $(j,k)$ entry is the order of $f_{1}f_{j}f_{k}$ in $S_{25}$.
\begin{center}
\begin{figure}[H]
	\begin{tikzpicture}[every node/.style={font=\footnotesize}]
	 	\matrix[matrix of math nodes, left delimiter=(, right delimiter=), row sep=0.15pt, column sep=0.15pt] (m){ 
			2 & 2 & 2 & 2 & 2 & 2 & 2 & 2 & 2 & 2 & 2 & 2 & 2 & 2 & 2 & 2 & 2 & 2 & 2 & 2 & 2 & 2 & 2 & 2\\ 
            2 & 2 & 3 & 5 & 6 & 6 & 6 & 6 & 6 & 6 & 6 & 6 & 6 & 6 & 6 & 6 & 6 & 6 & 6 & 6 & 6 & 6 & 6 & 6\\ 
            2 & 3 & 2 & 4 & 3 & 7 & 8 & 8 & 8 & 8 & 8 & 8 & 8 & 8 & 8 & 8 & 8 & 8 & 8 & 8 & 8 & 8 & 8 & 8\\ 
            2 & 5 & 4 & 2 & 5 & 7 & 6 & 14 & 8 & 8 & 8 & 8 & 8 & 8 & 8 & 8 & 8 & 8 & 8 & 8 & 8 & 8 & 8 & 8\\ 
            2 & 6 & 3 & 5 & 2 & 6 & 4 & 9 & 12 & 28 & 8 & 8 & 8 & 8 & 8 & 8 & 8 & 8 & 8 & 8 & 8 & 8 & 8 & 8\\ 
            2 & 6 & 7 & 7 & 6 & 2 & 7 & 9 & 10 & 18 & 12 & 28 & 8 & 8 & 8 & 8 & 8 & 8 & 8 & 8 & 8 & 8 & 8 & 8\\ 
            2 & 6 & 8 & 6 & 4 & 7 & 2 & 8 & 5 & 12 & 12 & 36 & 12 & 28 & 8 & 8 & 8 & 8 & 8 & 8 & 8 & 8 & 8 & 8\\ 
            2 & 6 & 8 & 14 & 9 & 9 & 8 & 2 & 9 & 11 & 12 & 30 & 12 & 36 & 12 & 28 & 8 & 8 & 8 & 8 & 8 & 8 & 8 & 8\\
            2 & 6 & 8 & 8 & 12 & 10 & 5 & 9 & 2 & 10 & 6 & 13 & 12 & 12 & 12 & 36 & 12 & 28 & 8 & 8 & 8 & 8 & 8 & 8\\ 
            2 & 6 & 8 & 8 & 28 & 18 & 12 & 11 & 10 & 2 & 11 & 13 & 20 & 12 & 30 & 12 & 12 & 36 & 12 & 28 & 8 & 8 & 8 & 8\\ 
            2 & 6 & 8 & 8 & 8 & 12 & 12 & 12 & 6 & 11 & 2 & 12 & 7 & 15 & 16 & 12 & 12 & 12 & 12 & 36 & 12 & 28 & 8 & 8\\ 
            2 & 6 & 8 & 8 & 8 & 28 & 36 & 30 & 13 & 13 & 12 & 2 & 13 & 15 & 16 & 52 & 12 & 30 & 12 & 12 & 12 & 36 & 12 & 28\\ 
            2 & 6 & 8 & 8 & 8 & 8 & 12 & 12 & 12 & 20 & 7 & 13 & 2 & 14 & 8 & 30 & 40 & 48 & 12 & 12 & 12 & 12 & 12 & 36\\ 
            2 & 6 & 8 & 8 & 8 & 8 & 28 & 36 & 12 & 12 & 15 & 15 & 14 & 2 & 15 & 17 & 18 & 60 & 16 & 12 & 30 & 12 & 12 & 12\\ 
            2 & 6 & 8 & 8 & 8 & 8 & 8 & 12 & 12 & 30 & 16 & 16 & 8 & 15 & 2 & 16 & 9 & 19 & 20 & 104 & 48 & 12 & 12 & 12\\ 
            2 & 6 & 8 & 8 & 8 & 8 & 8 & 28 & 36 & 12 & 12 & 52 & 30 & 17 & 16 & 2 & 17 & 19 & 42 & 80 & 40 & 48 & 12 & 30\\ 
            2 & 6 & 8 & 8 & 8 & 8 & 8 & 8 & 12 & 12 & 12 & 12 & 40 & 18 & 9 & 17 & 2 & 18 & 10 & 21 & 30 & 120 & 16 & 48\\ 
            2 & 6 & 8 & 8 & 8 & 8 & 8 & 8 & 28 & 36 & 12 & 30 & 48 & 60 & 19 & 19 & 18 & 2 & 19 & 21 & 22 & 90 & 20 & 104\\ 
            2 & 6 & 8 & 8 & 8 & 8 & 8 & 8 & 8 & 12 & 12 & 12 & 12 & 16 & 20 & 42 & 10 & 19 & 2 & 20 & 11 & 56 & 24 & 20\\ 
            2 & 6 & 8 & 8 & 8 & 8 & 8 & 8 & 8 & 28 & 36 & 12 & 12 & 12 & 104 & 80 & 21 & 21 & 20 & 2 & 21 & 23 & 24 & 114\\ 
            2 & 6 & 8 & 8 & 8 & 8 & 8 & 8 & 8 & 8 & 12 & 12 & 12 & 30 & 48 & 40 & 30 & 22 & 11 & 21 & 2 & 22 & 12 & 25\\ 
            2 & 6 & 8 & 8 & 8 & 8 & 8 & 8 & 8 & 8 & 28 & 36 & 12 & 12 & 12 & 48 & 120 & 90 & 56 & 23 & 22 & 2 & 23 & 25\\ 
            2 & 6 & 8 & 8 & 8 & 8 & 8 & 8 & 8 & 8 & 8 & 12 & 12 & 12 & 12 & 12 & 16 & 20 & 24 & 24 & 12 & 23 & 2 & 24\\ 
            2 & 6 & 8 & 8 & 8 & 8 & 8 & 8 & 8 & 8 & 8 & 28 & 36 & 12 & 12 & 30 & 48 & 104 & 20 & 114 & 25 & 25 & 24 & 2 \\
		};
			\end{tikzpicture}
	\caption{Three Generator Pancake Matrix with the first generator being $f_1$ with $n=25$, e.g., the $(19,24)$ entry is 20, which is the order of $f_1f_{19}f_{24}$.}
	\label{f:3matrix}
\end{figure}
\end{center}
\end{ex}

In the next section, we describe the pancake matrix for $B_n$, and make connections to the corresponding pancake graph of $B_n$.
	
		\section{$B_n$ results}\label{s:sbresults}

We now provide a complete description for the order of $f^B_{i}f^B_{j}, 0\leq i,j\leq n-1$ for signed permutations.
	
		\begin{thm}\label{t:mainb}
	If ${m^B_{i-1,j-1}}$ is the order of $f^B_{i-1} f^B_{j-1}$ with $1\leq i< j\leq n$, then
 		\begin{enumerate}
 		\item\label{c:diagonalsb} $m^B_{i-1,i-1}=1$,
 		\item\label{c:symmetryb}  $m^B_{i-1,j-1}=m^B_{j-1,i-1}$,
 		\item\label{c:fourb} If $1< i\leq \lfloor\frac{j}{2}\rfloor$ (with $j\geq4$) then $m^B_{i-1,j-1}=4$.
 		\item\label{c:lemma1b} If $1\leq\lfloor\frac{j}{2}\rfloor<i< j-1$ (with $j\geq4$), then
 		\[m^B_{i-1,j-1}=\begin{cases}
 		2q &\text{ if $r=0$, and}\\
 		2q(q+1)&\text{ if $r\neq0$}
 		 		\end{cases}\] where $d=j-i$, $q=\lfloor\frac{j}{d}\rfloor$, $r=j\pmod{d}$,
 		\item\label{c:offdiagonalsb} If $i=j-1$ (with $j\geq3$) then $m^B_{i-1,j-1}=2j$.
 		\end{enumerate}
 		\end{thm}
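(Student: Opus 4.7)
Parts (1) and (2) are immediate: each $f^B_{i-1}$ is a product of disjoint $4$-cycles on $[\pm n]$ and hence an involution, giving $m^B_{i-1,i-1}=1$; and $(f^B_{i-1}f^B_{j-1})^{-1}=f^B_{j-1}f^B_{i-1}$ yields symmetry.

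For (3)--(5), the plan is to compute the disjoint cycle decomposition of $w:=f^B_{i-1}f^B_{j-1}$ as a permutation of the $2n$-element set $[\pm n]$ and take the least common multiple of its cycle lengths, following the blueprint used in the proof of Theorem~\ref{t:rank3}. The natural surjection $\pi:B_n\twoheadrightarrow S_n$ that forgets signs sends $w$ to the unsigned product $f_{i-1}f_{j-1}$, whose cycle structure is already controlled by Theorem~\ref{t:main}. Each $\pi(w)$-orbit of length $\ell$ either lifts to two disjoint $w$-orbits of the same length $\ell$ (when $w^\ell$ returns each element with its original sign) or fuses with its negative into a single $w$-orbit of length $2\ell$ (when $w^\ell$ flips each sign). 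Thus $m^B_{i-1,j-1}$ is either $m_{i-1,j-1}$ or $2m_{i-1,j-1}$, and the task reduces to identifying which regime each case falls into.

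Case (3), $1<i\le\lfloor j/2\rfloor$: because $2i\le j$, the first $i$ positions are sent by $f^B_{j-1}$ into positions above $i$ and left alone by $f^B_{i-1}$ thereafter, so the unsigned $2$-cycles of Theorem~\ref{t:main}(4a) lift to signed $4$-cycles and $m^B_{i-1,j-1}=4$. Case (5), $i=j-1$: a direct computation shows $w$ is a single $2j$-cycle on $\{\pm 1,\dots,\pm j\}$ (with all other points fixed), doubling the length-$j$ cycle of Theorem~\ref{t:main}(4c) and giving order $2j$. Case (4), $\lfloor j/2\rfloor<i<j-1$: set $d=j-i$, $q=\lfloor j/d\rfloor$, $r=j\pmod d$, and enumerate the $\pi(w)$-orbits as in the proof of Theorem~\ref{t:main}(4b). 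When $r=0$ each orbit encounters an even number of sign-flipping positions (those of index $\le i$) along a loop, so every orbit lifts without fusion and $m^B_{i-1,j-1}=2q$; when $r\neq 0$ the long orbit encounters an odd number of such positions, fuses with its negative, and contributes length $2q(q+1)$ to the LCM, which dominates and yields the stated formula uniformly across the parity subcases of Theorem~\ref{t:main}(4b).

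The main obstacle is the sign-bookkeeping in case (4): one must establish the parity dichotomy ``$r=0$ lifts, $r\neq 0$ fuses'' by tracking the number of sign-flipping indices each orbit visits modulo $2$, via a residue computation based on the explicit orbit descriptions. This is the only substantive departure from the unsigned argument, and it is precisely what collapses the three parity-dependent formulas of Theorem~\ref{t:main}(4b) into the single clean signed value $2q(q+1)$.
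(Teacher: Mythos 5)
Your parts (1), (2) and (5) are fine, and your general reduction --- each unsigned orbit of $\pi(w)$ either splits into two signed orbits of the same length or fuses with its negative into one orbit of twice the length, so $m^B_{i-1,j-1}\in\{m_{i-1,j-1},\,2m_{i-1,j-1}\}$ --- is a legitimate alternative starting point; the paper never looks at the unsigned cycle structure at all, but instead reduces to $j=n$ and tracks the period of the block decomposition $[\beta_0\alpha_1\beta_1\cdots\alpha_q\beta_q]$ of the window under repeated application of $f^B_{n-1}f^B_{j-1}$, which sidesteps all sign bookkeeping.

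The genuine gap is exactly the step you defer in case (4), and as stated it is not merely unproven but false. For $r=0$ take $j=9$, $i=6$ (so $d=3$, $q=3$): the unsigned permutation has cycles $(1\,9\,6\,3\,7\,4)$ and $(2\,8\,5)$, and the $3$-cycle meets the sign-flipping positions an odd number of times, so it \emph{fuses} into a signed $6$-cycle; the order is still $2q=6$, but only because the doubled length happens to divide $2q$, not because ``every orbit lifts.'' For $r\neq0$ there is in general no ``long orbit'' of length $q(q+1)$: for $j=7$, $i=4$ ($d=3$, $q=2$, $r=1$) the unsigned cycles have lengths $3$ and $4$, the $3$-cycle fuses while the $4$-cycle lifts, and $2q(q+1)=12$ appears only as a least common multiple; for $j=5$, $i=3$ both cycles fuse. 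So which orbits fuse depends on the same parities of $q$, $r$, $t$ that produce the three subcases of Theorem~\ref{t:main}, and to see that the LCM always collapses to $2q$ resp.\ $2q(q+1)$ you need the explicit unsigned cycle lengths (Theorem~\ref{t:main} only records the order, citing KM16) together with a per-orbit flip count --- that ``residue computation'' is essentially the whole proof, not a loose end. Two further slips: the sign-flipping positions are those with index in $(i,j]$ (equivalently, the entries passing through the first $d$ window slots), not those of index $\leq i$; and in case (3) you must also rule out that one of the genuine unsigned $4$-cycles (which exist, since the unsigned order is $4$) fuses into a signed $8$-cycle --- the remark about $2$-cycles lifting does not address this, and indeed some unsigned $2$-cycles there do fuse into signed $4$-cycles, harmlessly.
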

 		
 		\begin{proof}
For Case (\ref{c:diagonalsb}), notice that $(f^B_i)^{-1}=f^B_i$, and therefore $m^B_{i,i}=1$.

For Case (\ref{c:symmetryb}), notice that $(f^B_if^B_j)^{-1}=f^B_jf^B_i$, and therefore $m_{i,j}^B=m_{j,i}^B$.

Since elements in $B_{n}$ can be thought of as elements in $B_{n+1}$ that leave $n$ fixed, $M^B_{n}$ can be thought of as the $n\times n$ submatrix of $M^B_{n+1}$ obtained by deleting the last row and the last column of $M^B_{n+1}$. Therefore, it is enough to prove the remaining cases when $j=n$.

For Case (\ref{c:fourb}), notice that the identity permutation $e=[1\;2\;3\;\ldots\;n]$ becomes 
\[
\left[\underline{n}\;\underline{n-1}\;\cdots\;\left(\underline{\left\lfloor\frac{n}{2}\right\rfloor+1}\right)\;\cdots\;1\;2\;\ldots\;j\right]
\] after multiplying it by $f^B_{n-1}f^B_{j-1}$. In other words, $f^B_{n-1}f^B_{j-1}$ will reverse the last $n-j$ symbols of the identity in $B_n$, change their sign and place them at the beginning of the window notation. Since $j\leq\lfloor\frac{n}{2}\rfloor$, the first $j$ characters and the last $n-j$ characters of $e$ (seen as a string in window notation) do not overlap, and thus will behave independently after multiplying by $f^B_{n-1}f^B_{j-1}$. It takes 4 multiplications by $f^B_{n-1}f^B_{j-1}$ for the first $j$ characters of $e$ to return to their original position. Moreover, since $j\leq n-j$, it also take 4 multiplications by $f^B_{n-1}f^B_{j-1}$ for the last $n-j$ characters in $e$ to return to their original position in $e$. Thus $m^B_{n-1,j-1}=4$.

For Case (\ref{c:lemma1b}), let $d=n-j$, $q=\lfloor \frac{n}{d}\rfloor$ and $r=n\pmod{d}$. Notice that any element $w=[w_1\;w_2\;\cdots\;w_n]$ in $B_n$ can be written in the form
\[
\left[\beta_0\alpha_1\beta_1\cdots\alpha_q\beta_q\right],
\] where $\alpha_k$ and $\beta_l$ are substrings of $w$ written in window notation and satisfying $\ell(\alpha_k\beta_k)=d, $ and $\ell(\beta_l)=r$ for $1\leq k\leq q$, $0\leq l\leq q$ (and therefore $\ell(\alpha_k)=d-r$). Here, $\ell(\cdot)$ denotes the length function on strings with brackets ignored. So, for example, $\ell([1\;2\;3])=3$ and $\ell([4\;3\;\underline{2}\;1])=4$.

Multiplying $[\beta_0\alpha_1\beta_1\cdots\alpha_q\beta_q]$ by $f_{n-1}^Bf^B_{j-1}$ gives
\[
\left[\overline{\underline{\beta_q}}\overline{\underline{\alpha_q}}\beta_0\alpha_1\beta_1\cdots\alpha_{q-1}\beta_{q-1}\right],
\] where if $x=x_1x_2\cdots x_m$, $\underline{x}$ and $\overline{x}$ denote $\underline{x_1}\;\underline{x_2}\;\cdots\;\underline{x_m}$ and $x_mx_{m-1}\cdots x_2x_1$, respectively. Since repeated applications of $f_{n-1}^Bf^B_{j-1}$ to $w$ will eventually return $w$ to itself, it follows that the $\alpha$ and $\beta$ segments would return to their original positions.  Therefore the periods of the different $\alpha$ and $\beta$ under multiplication by $f_{n-1}^Bf^B_{j-1}$ have to be the same. The effect of $f_{n-1}^Bf^B_{j-1}$ reverses the last $d$ characters of $w$ and changes their sign. If $r=0$ then there are $q$ $\alpha$ substrings and no $\beta$ substrings. The period of $[\alpha_1\alpha_2\cdots\alpha_q]$ under $f_{n-1}^Bf^B_{j-1}$ is $2q$. Furthermore, if $r\neq0$ then there are $q$ $\alpha$ substrings and $q+1$ $\beta$ substrings, and therefore the period of $[\beta_0\alpha_1\beta_1\ldots\alpha_q\beta_q]$ under $f_{n-1}^Bf^B_{j-1}$ is $2q(q+1)$. This proves Case (\ref{c:lemma1b}).

To prove Case (\ref{c:offdiagonalsb}), notice that $e$ becomes 
\[
\left[\underline{n}\;1\;2\;\ldots\;(n-1)\right]
\] after multiplying it by $f^B_{n-1}f^B_{n-2}$. That is, the effect of multiplying $[w(1)\;w(2)\;\cdots\;w(n)]$ by $f^B_{n-1}f^B_{n-2}$ is to place the last character into the first position and reverse its sign. Thus $2n$ applications of $f^B_{n-1}f^B_{n-2}$ are needed to return the characters of $e$ to its original position. Hence, $m^B_{n-1,n-2}=2n$.
 		\end{proof}

	\begin{ex} In Figure~\ref{f:bmatrix}, we depict $20\times 20$ Coxeter matrix for $B_{20}$.
\begin{center}
\begin{figure}[H]
	\begin{tikzpicture}
	 	\matrix[matrix of math nodes, left delimiter=(, right delimiter=), row sep=1pt, column sep=1pt] (m){ 
			1  & 4  & 4  & 4  & 4  & 4  & 4  & 4  & 4  & 4  & 4  & 4  & 4  & 4  & 4  & 4  & 4  & 4  & 4  & 4 \\ 
			4  & 1  & 6  & 4  & 4  & 4  & 4  & 4  & 4  & 4  & 4  & 4  & 4  & 4  & 4  & 4  & 4  & 4  & 4  & 4 \\ 
			4  & 6  & 1  & 8  & 12 & 4  & 4  & 4  & 4  & 4  & 4  & 4  & 4  & 4  & 4  & 4  & 4  & 4  & 4  & 4 \\ 
			4  & 4  & 8  & 1  & 10 & 6  & 12 & 4  & 4  & 4  & 4  & 4  & 4  & 4  & 4  & 4  & 4  & 4  & 4  & 4 \\ 
			4  & 4  & 12 & 10 & 1  & 12 & 24 & 12 & 12 & 4  & 4  & 4  & 4  & 4  & 4  & 4  & 4  & 4  & 4  & 4 \\ 
			4  & 4  & 4  & 6  & 12 & 1  & 14 & 8  & 6  & 12 & 12 & 4  & 4  & 4  & 4  & 4  & 4  & 4  & 4  & 4 \\ 
			4  & 4  & 4  & 12 & 24 & 14 & 1  & 16 & 40 & 24 & 12 & 12 & 12 & 4  & 4  & 4  & 4  & 4  & 4  & 4 \\ 
			4  & 4  & 4  & 4  & 12 & 8  & 16 & 1  & 18 & 10 & 24 & 6  & 12 & 12 & 12 & 4  & 4  & 4  & 4  & 4 \\ 
			4  & 4  & 4  & 4  & 12 & 6  & 40 & 18 & 1  & 20 & 60 & 8  & 24 & 12 & 12 & 12 & 12 & 4  & 4  & 4 \\ 
			4  & 4  & 4  & 4  & 4  & 12 & 24 & 10 & 20 & 1  & 22 & 12 & 40 & 24 & 6  & 12 & 12 & 12 & 12 & 4 \\ 
			4  & 4  & 4  & 4  & 4  & 12 & 12 & 24 & 60 & 22 & 1  & 24 & 84 & 40 & 24 & 24 & 12 & 12 & 12 & 12\\ 
			4  & 4  & 4  & 4  & 4  & 4  & 12 & 6  & 8  & 12 & 24 & 1  & 26 & 14 & 10 & 8  & 24 & 6  & 12 & 12\\ 
			4  & 4  & 4  & 4  & 4  & 4  & 12 & 12 & 24 & 40 & 84 & 26 & 1  & 28 & 112 & 60 & 40 & 24 & 24 & 12\\ 
			4  & 4  & 4  & 4  & 4  & 4  & 4  & 12 & 12 & 24 & 40 & 14 & 28 & 1  & 30 & 16 & 60 & 40 & 24 & 24\\ 
			4  & 4  & 4  & 4  & 4  & 4  & 4  & 12 & 12 & 6  & 24 & 10 & 112 & 30 & 1  & 32 & 144 & 12 & 40 & 8 \\ 
			4  & 4  & 4  & 4  & 4  & 4  & 4  & 4  & 12 & 12 & 24 & 8  & 60 & 16 & 32 & 1  & 34 & 18 & 84 & 10\\ 
			4  & 4  & 4  & 4  & 4  & 4  & 4  & 4  & 12 & 12 & 12 & 24 & 40 & 60 & 144 & 34 & 1  & 36 & 180 & 84\\ 
			4  & 4  & 4  & 4  & 4  & 4  & 4  & 4  & 4  & 12 & 12 & 6  & 24 & 40 & 12 & 18 & 36 & 1  & 38 & 20\\ 
			4  & 4  & 4  & 4  & 4  & 4  & 4  & 4  & 4  & 12 & 12 & 12 & 24 & 24 & 40 & 84 & 180 & 38 & 1  & 40\\ 
			4  & 4  & 4  & 4  & 4  & 4  & 4  & 4  & 4  & 4  & 12 & 12 & 12 & 24 & 8  & 10 & 84 & 20 & 40 & 1 \\ 
		};
			\end{tikzpicture}
	\caption{Burnt Pancake Matrix with $n=20$. Notice that the matrix is symmetric, the entries in the main diagonal are all 1 and the entries in the off-diagonal are the even integers that are at least 4.}
	\label{f:bmatrix}
\end{figure}
\end{center}
\end{ex}

\subsection{Connection with the burnt pancake graph} The Pancake graph of $S_n$, and in particular its cycle structure, has been extensively studied (see, for example, \cite{Asai2006,HeySud,KF95, KonMed, KKS,KM16, LJD93}). From the results from Theorem~\ref{t:mainb}, one can derive results regarding the cycle structure of the Cayley graph corresponding to $B_n$ generated by $P^B$. Figure~\ref{f:networkb} displays this graph for $B_3$. Indeed, the following theorem, which is a signed version of \cite[Lemma 1]{KM16}, is obtained directly from Theorem~\ref{t:mainb}.

\begin{thm}\label{t:cyclesb}
The Cayley graph of $B_n$ with the generators $P^B$ (burnt pancake graph of $B_n$), with $n\geq2$, contains a maximal set of $\frac{2^nn!}{\ell}$ independent $\ell$-cycles of the form $(f^B_{i}f^B_j)^k$, with $0\leq i< j<n$, $\ell=2k$ and $k=(M^B_{n})_{i+1,j+1}$, the $(i+1,j+1)$ entry in $M^B_n$.
\end{thm}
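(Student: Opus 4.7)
The plan is to exploit the fact that for each fixed pair $0\le i<j<n$, the subgroup $H_{i,j}:=\langle f^B_i, f^B_j\rangle\leq B_n$ generated by the two involutions $f^B_i$ and $f^B_j$ is dihedral of order $2k=\ell$, where $k=(M^B_n)_{i+1,j+1}$ is the order of $f^B_if^B_j$ provided by Theorem~\ref{t:mainb}. Granting this, the theorem reduces to a coset-counting argument, and the only content is the identification of each cycle with a coset.

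First I would verify that for any $w\in B_n$, the walk in the burnt pancake graph obtained from $w$ by alternately right-multiplying by $f^B_i$ and $f^B_j$,
\[
w,\; wf^B_i,\; w(f^B_if^B_j),\; w(f^B_if^B_j)f^B_i,\; \ldots,\; w(f^B_if^B_j)^{k-1}f^B_i,\; w(f^B_if^B_j)^k=w,
\]
is a closed walk of length $2k=\ell$ whose vertex set is exactly the coset $wH_{i,j}$. The $2k$ listed vertices are pairwise distinct because the dihedral group $H_{i,j}$ has order $2k$ and the elements $e, f^B_i, f^B_if^B_j, f^B_if^B_jf^B_i,\ldots,(f^B_if^B_j)^{k-1}f^B_i$ are precisely its $2k$ distinct elements; therefore the walk is an embedded $\ell$-cycle of the required form $(f^B_if^B_j)^k$.

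Second, two such cycles starting from $w_1$ and $w_2$ are either identical or vertex-disjoint, for their vertex sets are the cosets $w_1H_{i,j}$ and $w_2H_{i,j}$, and these either coincide or are disjoint. The collection of cycles of the form $(f^B_if^B_j)^k$ is thus in bijection with $B_n/H_{i,j}$, yielding an independent (vertex-disjoint) family of cardinality
\[
\frac{|B_n|}{|H_{i,j}|}=\frac{2^n n!}{2k}=\frac{2^n n!}{\ell},
\]
and this family partitions the vertex set of the graph. Consequently no additional vertex-disjoint $\ell$-cycle of the form $(f^B_if^B_j)^k$ can be adjoined, so the family is maximal. The argument is essentially a clean packaging of Theorem~\ref{t:mainb} together with Lagrange's theorem; I do not foresee a substantive obstacle, the only subtlety being the recognition that the cycle through $w$ is precisely the coset $wH_{i,j}$, which is immediate once the dihedral structure is in place.
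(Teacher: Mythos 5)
Your proof is correct. It reaches the same count by what is, at bottom, the same observation as the paper, but packaged differently: the paper argues graph-theoretically that each vertex is incident to exactly one $f^B_i$-edge and one $f^B_j$-edge, so the subgraph using only these two generators is $2$-regular and decomposes into vertex-disjoint cycles whose common length $2k$ comes from Theorem~\ref{t:mainb}, and since every signed permutation lies on such a cycle the count is $2^n n!/\ell$. You instead identify each cycle with a coset of the dihedral subgroup $H_{i,j}=\langle f^B_i,f^B_j\rangle$ of order $2k$ and count cosets by Lagrange's theorem. The coset formulation makes the vertex-disjointness and the distinctness of the $2k$ vertices on each cycle completely explicit (two cosets are equal or disjoint; the $2k$ listed group elements exhaust $H_{i,j}$), whereas the paper leaves these as immediate consequences of the edge-coloring remark; conversely, the paper's phrasing makes the ``independent cycles in the Cayley graph'' picture more transparent. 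Both arguments rest on Theorem~\ref{t:mainb} for the value of $k$, and both obtain maximality from the fact that the cycles partition the vertex set, so there is no gap in your version.
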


\begin{proof}
The length of the cycles is given by Theorem~\ref{t:mainb}. Furthermore, every vertex in the burnt pancake graph is incident to exactly one edge corresponding to $f^B_{i}$ and one edge corresponding to $f^B_j$ (with $0\leq i< j< n$), these cycles are independent. Since every signed permutation is the vertex of a cycle of the form $(f^B_{i}f^B_j)^k$, there are $\frac{2^nn!}{\ell}$ of such independent cycles. 
\end{proof}

To illustrate the cycle structure described in the Theorem~\ref{t:cyclesb}, one can look at Figure~\ref{f:networkb} showing the burnt pancake graph of $B_3$. If one considers generators $f^B_1=[\underline{2}\;\underline{1}\;3]$ and $f^B_2=[\underline{3}\;\underline{2}\;\underline{1}]$, then the order of $f^B_1f^B_2$ is 6, and one can indeed notice that there are $\frac{2^3\cdot 3!}{12}=4$ independent cycles labeled with the generators $f^B_1$ (in red), and  $f^B_2$ (in blue). 

It is known that the burnt pancake graph of $B_n$ with $n\geq2$ is an $n$-regular, connected graph that has no triangles nor subgraphs isomorphic to $K_{2,3}$ (see~\cite{K08}). Moreover, if $g(n)$ denotes the diameter of the pancake graph of $B_n$, then $3n/2\leq g(n)\leq 2n-2$ (see~\cite{CohenBlum}). Determining the diameter of the pancake graph of $B_n$ remains an open problem, though exact values are known for $n\leq17$ (see~\cite{Cibulka}).

We recall that a \textit{chord} in a cycle $C$ is an edge not belonging to a $C$ that connects two vertices of $C$. Just in the case for the pancake graph of $S_n$ (see~\cite{KM16}), the cycles described in Theorem~\ref{t:cyclesb} have no chords. We make this formal in the following Lemma.

\begin{lem}\label{lem:nochords}
The cycles described in Theorem~\ref{t:cyclesb} have no chords.
\end{lem}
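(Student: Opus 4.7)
The plan is to argue by contradiction and reduce to an analogous statement in the pancake graph of $S_n$. I would start by supposing some cycle $C=(f^B_i f^B_j)^k$ (where $k=(M^B_n)_{i+1,j+1}$) has a chord between non-adjacent vertices $u,v\in C$, with $u^{-1}v=f^B_\ell$ for some pancake generator $f^B_\ell$. By vertex-transitivity of the Cayley graph, I may left-translate so that $u=e$; the cycle $C$ then coincides with the dihedral subgroup $D:=\langle f^B_i,f^B_j\rangle$, and the chord is the edge from $e$ to $v=f^B_\ell\in D$. Since the two cycle-neighbors of $e$ are its unique $f^B_i$- and $f^B_j$-neighbors (namely $f^B_i$ and $f^B_j$), non-adjacency forces $\ell\notin\{i,j\}$, so the lemma reduces to the claim that $f^B_\ell\notin D$ for every $\ell\in\{0,\ldots,n-1\}\setminus\{i,j\}$.

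For the main step I would invoke the sign-forgetting surjection $\pi\colon B_n\twoheadrightarrow S_n$, under which $\pi(f^B_k)=f_k$ for $k\geq 1$ and $\pi(f^B_0)=e$. When $\ell\geq 1$, the membership $f^B_\ell\in D$ forces $f_\ell=\pi(f^B_\ell)\in\pi(D)\leq \langle f_i,f_j\rangle$ (with $f_0$ read as $e$ when $i=0$). Since the unsigned cycle $(f_if_j)^{m_{i-1,j-1}}$ in the pancake graph of $S_n$ is chord-free by \cite{KM16}---equivalently, the only pancake generators of $S_n$ lying in $\langle f_i,f_j\rangle$ are $f_i$ and $f_j$ themselves---I would deduce $\ell\in\{i,j\}$, contradicting the reduction step. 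When $i=0$ this already exhausts the argument, because the excluded value $\ell=i$ coincides with the only index to which the projection is blind.

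The remaining case is $i,j\geq 1$ with $\ell=0$, where $\pi(f^B_0)=e$ renders the projection silent. For this I would exploit the block decomposition appearing in the proof of Theorem~\ref{t:mainb}: writing an arbitrary element of $D$ in the form $[\beta_0\alpha_1\beta_1\cdots\alpha_q\beta_q]$ with the appropriate block lengths, the sign changes produced by the reversal-and-negation actions of $f^B_i$ and $f^B_j$ propagate cyclically across all blocks, so any element of $D\cap\ker\pi$ has a sign vector symmetric under this cyclic action---a shape incompatible with the isolated negation at position $1$ defining $f^B_0=[\underline{1}\;2\;3\;\cdots\;n]$. This last step is the principal obstacle, since the projection tool handles generic $\ell$ cleanly but is insensitive to sign-only elements, and the kernel $D\cap\ker\pi$ has order $k/m_{i-1,j-1}$, which can exceed $1$ (for instance when $i=j-1$), necessitating a direct inspection.
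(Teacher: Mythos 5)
Your reduction (translate the chord to the identity, so chord-freeness becomes the statement that $f^B_\ell\notin D:=\langle f^B_i,f^B_j\rangle$ for $\ell\notin\{i,j\}$) is sound, and the sign-forgetting projection $\pi\colon B_n\to S_n$ does dispose of every chord label $\ell\ge 1$, granted the unsigned no-chord statement from \cite{KM16} that the paper itself quotes. The genuine gap is exactly where you flag it: the case $i,j\ge 1$, $\ell=0$. What you offer there---that elements of $D\cap\ker\pi$ have sign vectors ``symmetric under the cyclic action'' of the blocks---is not a proof: the block decomposition in the proof of Theorem~\ref{t:mainb} is set up only for pairs of the form $(f^B_{j-1},f^B_{n-1})$ and only in the middle range of indices, the claimed symmetry is never defined or verified, and since $D\cap\ker\pi$ is genuinely nontrivial (it has order $2$ already when $i=j-1$, e.g.\ it contains the all-negative element $[\underline{1}\,\underline{2}\,\cdots\,\underline{n}]$ for $\langle f^B_1,f^B_2\rangle$ in $B_3$), something substantive must be argued. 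A clean repair inside your framework: every reflection $\left(f^B_if^B_j\right)^sf^B_i$ of $D$ projects to a reflection of the dihedral group $\langle f_i,f_j\rangle$, hence $D\cap\ker\pi$ lies in the cyclic subgroup $\langle f^B_if^B_j\rangle$; the only involution there is the half-turn $\left(f^B_if^B_j\right)^{m/2}$, which is central in $D$. But $f^B_0$ does not commute with $f^B_i$ for $i\ge 1$ (the windows of $f^B_0f^B_i$ and $f^B_if^B_0$ already differ in the first entry), so the involution $f^B_0$ cannot lie in $D$. With that inserted, your argument is complete.

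For comparison, the paper's proof is shorter and needs neither \cite{KM16} nor any subgroup analysis: from a chord $f^B_k$ on the cycle $(f^B_if^B_j)^m$ it extracts a closed $6$-cycle of the form $f^B_if^B_jf^B_kf^B_jf^B_if^B_k$ or $(f^B_if^B_jf^B_k)^2$, contradicting the girth-$8$ theorem for the burnt pancake graph (Lemma~\ref{lem:nosixcycles}). Your route trades that external girth result for the unsigned chord-freeness of \cite{KM16} plus the extra $f^B_0$ case, so it is a genuinely different (and somewhat longer) path to the same conclusion.
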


To prove this lemma, we first recall that the burnt pancake graph of $B_n$ cannot have any simple cycles of length six. 

\begin{lem}[Theorem 10 in \cite{Compeau2011}]\label{lem:nosixcycles}
The \emph{girth} (length of the shortest simple cycle) of the burnt pancake graph of $B_n$ is 8. 
\end{lem}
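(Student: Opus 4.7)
The plan is to show the girth equals $8$ by separately establishing each bound.

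\textbf{Upper bound (girth $\leq 8$).} I would exhibit an explicit $8$-cycle. For the pair $f^B_0, f^B_1$, a direct computation gives $m^B_{0,1} = 4$ (this is the $j=2$ edge case of Theorem~\ref{t:mainb}(\ref{c:offdiagonalsb})). Hence $(f^B_0 f^B_1)^4 = e$, and the eight partial products $e,\, f^B_0,\, f^B_0 f^B_1,\, f^B_0 f^B_1 f^B_0,\, (f^B_0 f^B_1)^2,\, (f^B_0 f^B_1)^2 f^B_0,\, (f^B_0 f^B_1)^3,\, (f^B_0 f^B_1)^3 f^B_0$ constitute the full dihedral subgroup $\langle f^B_0, f^B_1\rangle$ of order $8$, so they are distinct and form a simple $8$-cycle.

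\textbf{Lower bound (girth $\geq 8$).} Any simple cycle of length $k$ through $e$ is a relation $f^B_{a_1}\cdots f^B_{a_k} = e$ with cyclically adjacent indices distinct and all partial products distinct. I would rule out $k\in\{2,3,4,5,6,7\}$. Lengths $2$ and $3$ are immediate: the graph is simple, and a triangle forces $f^B_{a_3} = f^B_{a_2} f^B_{a_1}$, requiring $f^B_{a_1}$ and $f^B_{a_2}$ to commute (a product of two involutions is an involution iff they commute), which contradicts $m^B_{i,j}\ge 4$ from Theorem~\ref{t:mainb}.

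The length-$4$ case rests on two observations. First, distinct generators have distinct cycle structures in $\mathfrak{S}_{2n}$: the explicit disjoint-cycle description of $f^B_k$ shows it consists of $\lfloor(k+1)/2\rfloor$ disjoint $4$-cycles, plus one $2$-cycle exactly when $k$ is even, and the pair $(\#\text{$4$-cycles},\#\text{$2$-cycles})$ is distinct for distinct $k$; hence if $g f^B_k g^{-1}$ happens to equal some $f^B_m$, then $k=m$. If a $4$-cycle has $a_1 = a_3$, then $f^B_{a_4} = f^B_{a_1} f^B_{a_2} f^B_{a_1}$ is conjugate to $f^B_{a_2}$, forcing $a_4 = a_2$, and the relation collapses to $(f^B_{a_1} f^B_{a_2})^2 = e$, which contradicts $m^B_{a_1,a_2} \ge 4$; the case $a_2 = a_4$ is symmetric. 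When all four indices are distinct, rewrite the relation as $f^B_{a_1} f^B_{a_2} = f^B_{a_4} f^B_{a_3}$ and verify injectivity of $(i,j)\mapsto f^B_i f^B_j$ on ordered pairs with $i\neq j$ by comparing window entries: the sign of the first entry discriminates $i<j$ from $i>j$, its magnitude narrows the options, and the entry at position $\max(i,j)+1$ (which equals $-1$ on one side but is left untouched on the other whenever the two $\max$'s disagree) produces a contradiction.

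Lengths $5$, $6$, and $7$ follow the same philosophy, with growing case analysis. The main obstacle is length $6$, since the unsigned pancake graph of $S_n$ has girth exactly $6$ via $(f_1 f_2)^3 = e$, so any argument must exploit the signed structure. Here the crucial input is that $m^B_{i,j}$ is always even (by inspection of Theorem~\ref{t:mainb}), ruling out every $(f^B_i f^B_j)^3 = e$; the remaining $6$-cycles must involve at least three distinct generators and can be eliminated by tracking the orbit of a distinguished element of $[\pm n]$, such as $1$, under six consecutive generator applications and showing it cannot return to its initial position in so few steps.
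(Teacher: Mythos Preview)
The paper does not prove this lemma: it is quoted verbatim as Theorem~10 of Compeau~\cite{Compeau2011} and used as a black box in the proof of Lemma~\ref{lem:nochords}. So there is no ``paper's own proof'' to compare against; your proposal is an attempt at a self-contained argument where the authors simply cite the literature.

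On the substance of your sketch: the upper bound is fine, although your pointer to Theorem~\ref{t:mainb}(\ref{c:offdiagonalsb}) is off, since that case is stated only for $j\ge 3$; the value $m^B_{0,1}=4$ is correct (it is the $(1,2)$ entry in Figure~\ref{f:bmatrix}) but needs a direct check rather than that citation. Your lower-bound argument is solid through length~$4$---the observation that distinct $f^B_k$ have distinct cycle types in $\mathfrak{S}_{2n}$ is exactly the right tool---but the treatment of lengths $5$, $6$, $7$ is only a plan, not a proof. In particular, for length~$6$ you correctly note that the alternating case $(f^B_if^B_j)^3=e$ is excluded because all $m^B_{i,j}$ are even, but ``tracking the orbit of $1$'' through six generators with at least three distinct indices is a substantial case analysis that you have not carried out, and it is precisely where Compeau's paper does the real work. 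As written, this is a reasonable outline of how one \emph{might} reprove the result, but the hard cases are deferred rather than done.
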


\noindent\textit{Proof of Lemma~\ref{lem:nochords}.} Let $C=(f_if_j)^{m^{B}_{i,j}}$ be a cycle and suppose that $C$ has a chord. Therefore there exists sign permutations $w_1$ and $w_2$, and $f_k^B\in P_n^B$ such that $w_2f_k^B=w_1$, with $w_1$ and $w_2$ being vertices of $C$. Furthermore, either $w_1(f^B_if^B_j)^s=w_2$ and $(f^B_if^B_j)^sf^B_k=e$, or $w_1(f_i^Bf_j^B)^sf_i^B=w_2$ and $(f^B_if^B_j)^sf_i^Bf^{B}_{k}=e$ with $s<m_{i,j}^B$. Hence, either \[w_2f^B_if^B_jf^B_k=w_1(f^B_if^B_j)^sf^B_if^B_jf^B_k=w_1f^B_if^B_j(f^B_if^B_j)^sf^B_k=w_1f^B_if_j^B,\text{ or}\]

 \[w_2f^B_jf^B_if^B_k=w_1(f_i^Bf_j^B)^sf_i^Bf^B_jf^B_if^B_k=w_1f_i^Bf_j^B(f_i^Bf^B_j)^{s}f^B_if^B_k=w_1f^B_if_j^B.\]
 
 Therefore, there exist a 6-cycle of the form $f^{B}_{i}f^{B}_{j}f^{B}_{k}f^{B}_{j}f^{B}_{i}f^{B}_{k}$ or of the form $(f^{B}_{i}f^{B}_{j}f^{B}_{k})^{2}$. This contradicts Lemma~\ref{lem:nosixcycles}, and therefore no such cycle $C$ exists.
\hfill$\square$

\subsection{Reflections} We now describe the set of \textit{burnt pancake reflections} $$T_B^{\pm}=\{wf^B_iw^{-1}\mid 0\leq i\leq n-1,w\in B_n\}.$$ We recall that any element in the set of reflections $T_B=\{ws^B_iw^{-1}\mid w\in B_n,0\leq i\leq n-1\}$ for signed permutations has the following form (see~\cite[Proposition 8.1.5]{BjornerBrenti}):
\[
\{(i,j)(\underline{i},\underline{j})\mid 1\leq i<|j|\leq n\}\cup\{(i,\underline{i})\mid 1\leq i\leq n\}.
\]

If $t\in T_B^{\pm}$ then $t=wf_i^Bw^{-1}$ for some $w\in B_n$ and $0\leq i\leq n-1$. If $w=[w_1\;w_2\;\cdots\;w_n]$, then from $wf^{B}_iw^{-1}=t$ we have
\[
    [\underline{w_{i+1}}\;\underline{w_{i}}\;\cdots\;\underline{w_{1}}\;w_{i+2}\;w_{i+3}\;\cdots\;w_n]=tw,
\] and so $t=(w_1,\underline{w_{i+1}})(w_2,\underline{w_{i}})\cdots(w_{i+1},\underline{w_1})$. In terms of notation, if a $w_j<0$, $1\leq j\leq n$, then $\underline{w_{j}}=-w_j>0$. Therefore, 

\begin{equation}\label{eq:breflections}
    T_B^{\pm}=\{(w_1,\underline{w_{i+1}})(w_2,\underline{w_{i}})\cdots(w_{i+1},\underline{w_1})\mid  w_i\in[\pm n], 0\leq i< n, w_{a}\neq w_{b}\text{ if }a\neq b\}
\end{equation}

In terms of comparing $T_B$ and $T^{\pm}_B$, we notice that any permutation of the form $(i,\underline{i})$ is in both sets. However, permutations of the form $(i,j)(i,\underline{j})$ with $1\leq i<|j|\leq n$ are not. 

As for the number of burnt reflections, from the description in (\ref{eq:breflections}), one gets

\begin{cor}
$\left|T^{\pm}_B\right|=\displaystyle\sum_{i=1}^n\binom{n}{i}2^{\lfloor i/2\rfloor}$.
\end{cor}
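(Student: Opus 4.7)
The approach uses the parametrization of $T_B^\pm$ given in equation~(\ref{eq:breflections}) and stratifies the set by the length of the product of transpositions. For $1 \leq k \leq n$, let $T_B^\pm(k)$ denote the subset consisting of elements that arise as products of exactly $k$ transpositions, i.e., with $i+1 = k$ in the parametrization. The absolute values $\{|w_1|,\ldots,|w_k|\}$ form a $k$-element subset of $[n]$---the support of $t$---so distinct values of $k$ yield disjoint pieces, and $T_B^\pm = \bigsqcup_{k=1}^{n} T_B^\pm(k)$. It then suffices to prove that $|T_B^\pm(k)| = \binom{n}{k}\,2^{\lfloor k/2 \rfloor}$ for each $k$.

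For a fixed $k$, the first step is to pick the support $S \subseteq [n]$ of size $k$, which contributes the $\binom{n}{k}$ factor. The rule $t(w_j) = -w_{k+1-j}$ derived in the paragraph preceding (\ref{eq:breflections}) means that the ``palindromic'' pairing $j \leftrightarrow k+1-j$ assembles $t$ out of $\lfloor k/2 \rfloor$ two-cycles in the $B_n$ sense, together with a negative fixed point on the middle element when $k$ is odd. Each pair $\{w_j, w_{k+1-j}\}$ contributes a two-cycle whose $B_n$-sign (``positive'' or ``negative'' in the conjugacy-class sense) is controlled by the product $\mathrm{sgn}(w_j)\,\mathrm{sgn}(w_{k+1-j})$, and the middle position, when present, always produces a negative fixed point. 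Thus the binary sign choices---one per pair---should account for the $2^{\lfloor k/2 \rfloor}$ distinct involutions with support $S$.

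The main obstacle is the multiplicity analysis. One must identify precisely which tuples $(w_1,\ldots,w_k)$ yield the same $t$: swapping the two entries within a pair, permuting the pairs amongst themselves, and flipping the sign of the middle element should preserve $t$, whereas altering a sign product across any pair should change $t$. Making this correspondence rigorous---and verifying that no additional collisions occur beyond these symmetries---is the crux. Once it is established that exactly $2^{\lfloor k/2 \rfloor}$ involutions arise per subset $S$, summing over $k$ yields the identity $|T_B^\pm| = \sum_{k=1}^{n} \binom{n}{k}\,2^{\lfloor k/2 \rfloor}$.
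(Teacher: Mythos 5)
There is a genuine gap, and it sits exactly at the step you defer as ``the crux.'' Fixing a support $S$ of size $k$ and recording only the per-pair sign products does not determine $t$: the tuple $(w_1,\dots,w_k)$ also determines \emph{which} elements of $S$ are matched by the palindromic pairing $j\leftrightarrow k+1-j$ and, for odd $k$, \emph{which} element occupies the middle (negated) position. Since $w$ ranges over all of $B_n$, every matching of $S$ and every choice of middle element actually occurs, and distinct matchings yield distinct involutions. The symmetries you list (swapping entries within a pair, permuting pairs, sign changes preserving the per-pair products) never connect two tuples that induce different matchings, so the collision analysis cannot reduce the per-support count to $2^{\lfloor k/2\rfloor}$; the actual number of elements of $T_B^{\pm}$ supported on a fixed $k$-set is $\frac{k!}{\lfloor k/2\rfloor !}$, that is $(k-1)!!\,2^{k/2}$ for $k$ even and $k!!\,2^{(k-1)/2}$ for $k$ odd, which exceeds $2^{\lfloor k/2\rfloor}$ as soon as $k\ge 3$.

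A concrete instance: in $B_3$, conjugating $f_2^B$ by $w=[1\;3\;2]$ gives $(1,\underline{2})(2,\underline{1})(3,\underline{3})$, while conjugating by $w=[2\;1\;3]$ gives $(2,\underline{3})(3,\underline{2})(1,\underline{1})$; both lie in $T_B^{\pm}$ with support $\{1,2,3\}$ but use different pairings, and in total there are six full-support elements of $T_B^{\pm}$ in $B_3$, not $2^{\lfloor 3/2\rfloor}=2$. Carried out rigorously, your stratification therefore yields $\left|T_B^{\pm}\right|=\sum_{k=1}^{n}\binom{n}{k}\frac{k!}{\lfloor k/2\rfloor !}$, which equals $15$ for $n=3$, whereas the displayed formula gives $11$. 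So the identity cannot be reached by this route: the missing multiplicity analysis is not merely unfinished, it overturns the per-support count $2^{\lfloor k/2\rfloor}$ on which both your argument and the paper's terse deduction from \eqref{eq:breflections} rely, and the formula as stated needs to be corrected accordingly.
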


\section{Acknowledgments}\label{s:ack}

The authors are grateful to Ivars Peterson for an introduction to the subject whose talk in 2014 at the EPaDel sectional meeting inspired this paper. We also thank Jacob Mooney and Kyle Yohler for independently writing computer code to verify our results.

\end{document}